\newcommand{\ket}[1]{|#1\rangle}
\newcommand{\bra}[1]{\langle #1|}
\newcommand{\proj}[1]{\ket{#1}\!\bra{#1}}
\newcommand{\ketbra}[2]{\ket{#1}\!\bra{#2}}
\theoremstyle{plain}
\newtheorem{theorem}{Theorem}[section]
\newtheorem{lemma}[theorem]{Lemma}
\newtheorem{proposition}[theorem]{Proposition}
\theoremstyle{definition}
\newtheorem{example}[theorem]{Example}
\newtheorem{definition}[theorem]{Definition}
\theoremstyle{remark}
\newtheorem{remark}[theorem]{Remark}
\DeclareMathOperator*{\argmax}{arg\,max}
\DeclareMathOperator*{\argmin}{arg\,min}
\newcommand{\la}{\langle}
\newcommand{\ra}{\rangle}
\newcommand{\mb}[1]{\mathbb{#1}}
\newcommand{\mc}[1]{\mathcal{#1}}
\newcommand{\tr}{\operatorname{tr}}
\newcommand{\Tr}{\operatorname{Tr}}
\newcommand{\supp}{\text{supp}}
\newcommand{\rank}{\text{rank}}
\newcommand{\cA}{\mc A}
\newcommand{\cB}{\mc B}
\newcommand{\cT}{\mc T}
\newcommand{\cE}{\mc E}
\newcommand{\cR}{\mc R}
\newcommand{\cM}{\mc M}
\newcommand{\cD}{\mc{D}}
\newcommand{\cS}{\mc{S}}
\newcommand{\floor}[1]{\left\lfloor #1 \right\rfloor}
\title{Capacities of quantum Markovian noise for large times}
\author{Omar Fawzi}
\author{Mizanur Rahaman}
\author{Mostafa Taheri}
\affil[1]{Univ Lyon, Inria, ENS Lyon, UCBL, LIP, Lyon, France}
\begin{document}

\maketitle
\begin{abstract}
Given a quantum Markovian noise model, we study the maximum dimension of a classical or quantum system that can be stored for arbitrarily large time. We show that, unlike the fixed time setting, in the limit of infinite time, the classical and quantum capacities are characterized by efficiently computable properties of the peripheral spectrum of the quantum channel. In addition, the capacities are additive under tensor product, which implies in the language of Shannon theory that the one-shot and the asymptotic i.i.d. capacities are the same. We also provide an improved algorithm for computing the structure of the peripheral subspace of a quantum channel, which might be of independent interest.
\end{abstract}

\section{Introduction }

Consider a quantum system that we would like to use for storing quantum or classical information. This system is affected by noise that we assume is Markovian. It is natural to ask what is the minimum error that can be achieved for storing $\log D$ (qu)bits of information for some fixed time $t$. This is a typical question studied in Shannon theory, but here we focus on the limit $t \to \infty$, i.e., the information should remain in the system for arbitrarily long times.

 Building a quantum system able to store quantum information for large times is one of the important goals of quantum information theory and it has been studied from different angles. Quantum error correction gives a mechanism to actively preserve the quantum information in a system undergoing local noise~\cite{terhal2015quantum}. Such methods can achieve much more than a memory and can be used for fault-tolerant quantum computation~\cite{gottesman2016surviving}. A related important area of research is the study of self-correcting (or passive) quantum memories~\cite{brown2016quantum}, i.e., physical systems that are robust to different forms of imperfections including thermal noise.
 
 In this paper, we study the question of quantum memory from an abstract perspective where given a dynamical quantum system, the objective is to characterize the maximum amount of information that can be stored in this system without placing restrictions on the encoding/decoding operations.  
  
 As an example, in the passive model, the noise $\cT$ is applied at each time step and we would like to characterize how many qubits can be stored reliably for time $t$ as a function of $\cT$ and $t$? Our framework can also model the active setting where a fixed recovery operation $\cR$ is applied at each time step; see Fig~\ref{fig:two_column_diagram} for an illustration. 
 Note that in our modeling, $\cT$ is not limited to representing undesirable noise; it can also be a model for  an engineered system such as cat qubits (see Examples~\ref{ex: cat code} and~\ref{ex: noise+ recovery}  for a discussion).
 Such questions were studied in the recent work~\cite{singh2024-v1} and they obtained, among other results about scrambling, conditions for the classical capacity to be zero as well as lower bounds on the classical capacity of ergodic channels.
 \begin{figure}[t]
\centering
\resizebox{8cm}{!}{

\begin{tikzpicture}[>=Stealth, every node/.style={align=center}]
    \node[rectangle, draw, fill=purple!20, minimum width=0.8cm, minimum height=0.9cm] (E) at (0, 0) {\(\mathcal{E}\)};

    \node[rectangle, draw, minimum width=2cm, minimum height=1.2cm, right=0.8cm of E] (T1) {};
    \node[rectangle, draw, fill=green!20, minimum width=0.7cm, minimum height=0.9cm, anchor=west] (N1) at ($(T1.west)+(0.15,0)$) {\(\mathcal{N}\)};
    \node[rectangle, draw, fill=blue!20, minimum width=0.7cm, minimum height=0.9cm, anchor=east] (R1) at ($(T1.east)-(0.15,0)$) {\(\mathcal{R}\)};
    \node at (T1.north) [above=0.2cm] {\(\mathcal{T}\)};
    \draw[->] (N1.east) -- (R1.west);

    \node[rectangle, draw, minimum width=2cm, minimum height=1.2cm, right=0.8cm of T1] (T2) {};
    \node[rectangle, draw, fill=green!20, minimum width=0.7cm, minimum height=0.9cm, anchor=west] (N2) at ($(T2.west)+(0.15,0)$) {\(\mathcal{N}\)};
    \node[rectangle, draw, fill=blue!20, minimum width=0.7cm, minimum height=0.9cm, anchor=east] (R2) at ($(T2.east)-(0.15,0)$) {\(\mathcal{R}\)};
    \node at (T2.north) [above=0.2cm] {\(\mathcal{T}\)};
    \draw[->] (N2.east) -- (R2.west);

    \node[circle, draw, fill=black, inner sep=0pt, minimum size=1.5pt, right=0.8cm of T2] (dot1) {};
    \node[circle, draw, fill=black, inner sep=0pt, minimum size=1.5pt, right=0.3cm of dot1] (dot2) {};
    \node[circle, draw, fill=black, inner sep=0pt, minimum size=1.5pt, right=0.3cm of dot2] (dot3) {};

    \node[rectangle, draw, minimum width=2cm, minimum height=1.2cm, right=0.8cm of dot3] (T3) {};
    \node[rectangle, draw, fill=green!20, minimum width=0.7cm, minimum height=0.9cm, anchor=west] (N3) at ($(T3.west)+(0.15,0)$) {\(\mathcal{N}\)};
    \node[rectangle, draw, fill=blue!20, minimum width=0.7cm, minimum height=0.9cm, anchor=east] (R3) at ($(T3.east)-(0.15,0)$) {\(\mathcal{R}\)};
    \node at (T3.north) [above=0.2cm] {\(\mathcal{T}\)};
    \draw[->] (N3.east) -- (R3.west);

    \node[rectangle, draw, fill=orange!20, minimum width=0.8cm, minimum height=0.9cm, right=0.8cm of T3] (D) {\(\mathcal{D}\)};

    \draw[->] ($(E.east) + (0,0)$) -- ($(T1.west) - (0.2,0)$);
    \draw[->] ($(T1.east) + (0.2,0)$) -- ($(T2.west) - (0.2,0)$);
    \draw[->] ($(T2.east) + (0.2,0)$) -- ($(dot1.west) - (0.2,0)$);
    \draw[->] ($(dot3.east) + (0.2,0)$) -- ($(T3.west) - (0.2,0)$);
    \draw[->] ($(T3.east) + (0.2,0)$) -- ($(D.west) - (0.2,0)$);

    \node[rectangle, draw, fill=purple!20, minimum width=0.8cm, minimum height=0.9cm, above=1.5cm of E] (UE) {\(\mathcal{E}\)};

    \node[rectangle, draw, fill=cyan!20, minimum width=2cm, minimum height=1.2cm, right=0.8cm of UE] (UT1) {\(\mathcal{T}\)};
    \node[rectangle, draw, fill=cyan!20, minimum width=2cm, minimum height=1.2cm, right=0.8cm of UT1] (UT2) {\(\mathcal{T}\)};
    \node[circle, draw, fill=black, inner sep=0pt, minimum size=1.5pt, right=0.8cm of UT2] (UDot1) {};
    \node[circle, draw, fill=black, inner sep=0pt, minimum size=1.5pt, right=0.3cm of UDot1] (UDot2) {};
    \node[circle, draw, fill=black, inner sep=0pt, minimum size=1.5pt, right=0.3cm of UDot2] (UDot3) {};
    \node[rectangle, draw, fill=cyan!20, minimum width=2cm, minimum height=1.2cm, right=0.8cm of UDot3] (UT3) {\(\mathcal{T}\)};
    \node[rectangle, draw, fill=orange!20, minimum width=0.8cm, minimum height=0.9cm, right=0.8cm of UT3] (UD) {\(\mathcal{D}\)};

    \draw[->] ($(UE.east) + (0,0)$) -- ($(UT1.west) - (0.2,0)$);
    \draw[->] ($(UT1.east) + (0.2,0)$) -- ($(UT2.west) - (0.2,0)$);
    \draw[->] ($(UT2.east) + (0.2,0)$) -- ($(UDot1.west) - (0.2,0)$);
    \draw[->] ($(UDot3.east) + (0.2,0)$) -- ($(UT3.west) - (0.2,0)$);
    \draw[->] ($(UT3.east) + (0.2,0)$) -- ($(UD.west) - (0.2,0)$);

    \node[above left=0.3cm of UE] {(a)};
    \node[above left=0.3cm of E] {(b)};

    \draw[thick, ->] ($(E.west)-(.3,1)$) -- ($(D.east)+(0,-1)$) node[midway, below] {\textit{time}};
\end{tikzpicture}
}
\caption{
(a) \textbf{Passive error correction}: The system evolves under noise (\(\mathcal{T}\)) over time, starting with encoding (\(\mathcal{E}\)) and ending with decoding (\(\mathcal{D}\)).  
(b) \textbf{Active error correction}: Noise (\(\mathcal{N}\)) and recovery (\(\mathcal{R}\)) maps are applied periodically to maintain system integrity, from encoding (\(\mathcal{E}\)) to decoding (\(\mathcal{D}\)).
}\label{fig:two_column_diagram}
\end{figure}

\paragraph{Our results} In this work, we focus on the setting of arbitrarily large time, i.e., $t \to \infty$ and we characterize both the classical and quantum capacities for a fixed error $\delta$ in terms of the peripheral spectrum of the noise model (Theorem~\ref{thm:classical and quantum capacity of repeated channel}). In addition, we show that such capacities are additive for the tensor product of channels (Theorem~\ref{thm:additivity}) and can be computed in polynomial-time in the dimension (Theorem~\ref{thm:algorithm-structure}). This algorithm for computing the structure of the peripheral subspace (or the fixed point subspace) of a quantum channel improves on previous works and might of independent interest. We note that the fact that we can efficiently characterize the infinite-time capacities of noisy channels is in contrast with other settings for which capacities or optimal success probabilities correspond to hard problems, such as maximum independent set~\cite{shannon1956zero} or maximum coverage problems~\cite{barman2017algorithmic}.

Our analysis extends to infinite-dimensional Hilbert spaces, where we identify specific conditions under which our findings remain valid (see Proposition~\ref{prop: infinte-hilbert-capacity}). This generalization broadens the applicability of our results
 
to Markovian noise on continuous variable systems. We then illustrate our framework through examples in Section~\ref{sec: examples}.

\section{Results}
For a Hilbert space $ H$, we use the notation $\Tr( H)$ for the trace class operators on $H$. When we do not need to make the Hilbert space explicit and when it has dimension $d$, we denote the algebra of linear operators (or $d \times d$ complex matrices) by $\mc M_d(\mb{C})$ or simply $\mc M_d$ for short. We recall that a quantum channel $\mc E : \mc M_D \to \mc M_d$ is a completely positive and trace preserving linear map~\cite{Wolfe2012}.

We start with the standard definition of an error correcting code for a noisy quantum channel $\mc T$. Note that throughout the paper $d \geq 1$ is an integer and $\mc T$ is a quantum channel from $\cM_d$ to $\cM_d$.

A quantum channel $\mc E: \mc M_D(\mb C)\to \mc M_d(\mb C) $ is a $(D,\delta)$ classical code for a channel $\mc T: \mc M_d(\mb C)\to \mc M_d(\mb C)$ if there exists a recovery channel $\mc R: \mc M_d(\mb C)\to \mc M_D(\mb C)$ such that the average fidelity of the channel $\mc R\circ \mc T\circ \mc E$ over all diagonal density matrices in $\mc M_D$ is at least $1-\delta$, i.e.,
        \begin{equation}
            \frac 1D \sum_{i=1}^D \la i| \mc R\circ \mc T\circ \mc E(|i\ra\la i|) |i\ra \geq 1-\delta,
        \end{equation}
        where $\{\ket{i}\}_{i=1}^D$ is a fixed orthonormal basis of $\mb C^D$. We say that $\mc E$ is a $(D,\delta)$ quantum code if there exist channel $\mc R$ such that the entanglement fidelity of $\mc R\circ \mc T\circ \mc E$ is at least $1-\delta$, i.e.,
        \begin{equation}
             \bra{\Phi^+} \left(\mc{I} \otimes (\mc R\circ \mc T\circ \mc E) \right)\left(|\Phi^+\ra\la \Phi^+|\right) \ket{\Phi^+} \geq 1-\delta,
        \end{equation}
        where $|\Phi^+\ra = \frac{1}{\sqrt{D}} \sum_{i=1}^D \ket{i} \otimes \ket{i} \in (\mb C^D)^{\otimes 2}$ and $\mc{I}$ is the identity quantum channel on the reference system of dimension $D$.
 \begin{definition}[Capacity]
        The (one-shot) classical capacity for a channel $\mc{T}$ is defined as 
                $$C_\delta(\mc{T}) = \sup_{\mc{E}} \log D,$$
        where the supremum is over all possible $(D,\delta)$ classical codes $\mc{E}$ for $\mc{T}$. Similarly, the (one-shot) quantum capacity for the channel $\mc{T}$ is defined as 
        $$Q_\delta(\mc{T}) = \sup_{\mc{E}} \log D,$$ 
        where the supremum is taken over all possible $(D,\delta)$ quantum codes.
    \end{definition}

 \begin{remark}
    This definition is concerned about passive error correction where no recovery is allowed at regular intervals. One could define an active capacity as follows:
    \begin{align*}
         Q_\delta^{\mathrm{active}, t}(\mc T) &= \log \{\max D : \exists \cE, \cR_1, \dots, \cR_{t} 
         \text{ s.t. }\\ & F_E(\cR_t \circ \cT \circ \cdots \circ \cT \circ \cR_1 \circ \cT \circ \cE) \geq 1-\delta \},
    \end{align*}
   
    where $\cE : \cM_D \to \cM_d$ and $\cR_i : \cM_d \to \cM_d$ for $1 \leq i \leq t-1$ and $\cR_t : \cM_d \to \cM_D$ are quantum channels. Note that the active capacity can be larger than the passive one; in particular, it is simple to see that for any $t \geq 1$, $Q_0^{\mathrm{active}, t}(\mc T) = Q_0(\mc T)$ by choosing the recovery maps $\cR_i$ to decode and re-encode information. The works~\cite{muller2014quantum,fawzi2022lower,pirandola2019end} have studied some Shannon-theoretic aspects of active error correction.
    \end{remark}

        The \textbf{infinite-time classical or quantum capacity} of a quantum channel $\mc T$ is defined as
        \( C_\delta^\infty (\mc T)= \lim_{t\to \infty} C_\delta(\mc T^t)\), and $  Q_\delta^\infty(\mc T) = \lim_{t\to \infty} Q_\delta(\mc T^t). $

As we will demonstrate in the following, the peripheral subspace of a channel is directly related to its infinite-time capacity. The peripheral subspace of a quantum channel $\mathcal{T}$ is defined as follows:
\[
\chi_{\mathcal{T}} =\text{span} \{ X \in \mathcal{M}_d \mid \mathcal{T}(X) = \lambda X, \, |\lambda| = 1, \, \lambda \in \mathbb{C} \}.
\]

 For peripheral subspace $\chi_\mc T$ there exists  Hilbert space decomposition $\mb{C}^d = H_0 \oplus \bigoplus_{k=1}^K (H_{k,1} \otimes H_{k,2})$  such that

    \begin{align}
    \label{eq:peripheral-decomposition-1}
        \chi_{\mc{T}} = 0 \oplus \bigoplus_{k=1}^K \mc{M}_{d_k} \otimes \omega_k,
    \end{align}
    where $\mc{M}_{d_k}$ is the full matrix algebra on $H_{k,1}$, with $d_k = \dim H_{k,1}$, and $\omega_k$ is a density operator on $H_{k,2}$~\cite{Wolfe2012}.

\begin{theorem}\label{thm:classical and quantum capacity of repeated channel}
        Let $\mc{T}: \mc M_d \rightarrow \mc M_ d$ be a quantum channel and let $K$ and $\{d_k\}_{k=1}^K$ be the integers from the peripheral subspace decomposition in~\eqref{eq:peripheral-decomposition}.
        
        For any $\delta \in [0,1)$, the infinite time classical capacity of $\mc T$ is given by
        \begin{equation}    
        \label{eq:classical-cap-formula}
         C_{\delta}^\infty(\mc{T}) = \log
        \left(\left\lfloor \frac{\sum_k d_k}{1-\delta} \right\rfloor \right),
        \end{equation}
        and the quantum capacity can be bounded as
        \begin{equation}
        \label{eq:quantum-cap-bound}
        \log\left(\floor{\frac{\max_k d_k}{\sqrt{1-\delta}}}\right) \leq  Q_{\delta}^\infty (\mc T) \leq \log\left(\frac {\max_k d_k}{1-\delta}\right).
        \end{equation}

    \end{theorem}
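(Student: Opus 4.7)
The plan is to reduce everything to the structure of the peripheral subspace. The starting observation is that $\cT$ restricted to $\chi_{\cT}$ is an automorphism (permuting blocks of equal dimension and conjugating by unitaries within each block), while the complementary spectral part decays exponentially. Writing $\cT^t = \cT^t|_{\chi_{\cT}} \circ P + R_t$, where $P$ is the spectral projection onto $\chi_{\cT}$ and $\|R_t\|\to 0$, the image of any density matrix under $\cT^t$ lies at trace distance $O(\|R_t\|)$ from a state of the form $\bigoplus_k \sigma^{(k)}\otimes \omega_k$, and the invertible automorphism $\cT^t|_{\chi_{\cT}}$ can be absorbed into the encoder/decoder pair (which may depend on $t$). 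This identifies the limit with capacities of the algebra $\bigoplus_k \cM_{d_k}$ equipped with the peripheral embedding.

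For the classical capacity I would work in the POVM formulation: the success probability is $\frac{1}{D}\sum_i \Tr(M_i \sigma_i)$ with $M_i = \cR^*(\ketbra{i}{i})$ a POVM and $\sigma_i = \cT^t\circ\cE(\ketbra{i}{i})$. Replacing each $\sigma_i$ by its peripheral approximant $\hat\sigma_i = \bigoplus_k \hat\sigma_i^{(k)}\otimes\omega_k$ and pushing the POVM onto the algebra via $\tilde M_i^{(k)} := \Tr_{H_{k,2}}(M_i(I\otimes\omega_k))$---which satisfies $0\le \tilde M_i^{(k)}\le I$ and $\sum_i \tilde M_i^{(k)} = I_{H_{k,1}}$---gives $\sum_i \Tr(M_i\hat\sigma_i) = \sum_{i,k}\Tr(\tilde M_i^{(k)}\hat\sigma_i^{(k)}) \le \sum_k d_k$, hence $D \le \lfloor \sum_k d_k / (1-\delta)\rfloor$ once the vanishing error is controlled. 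For the matching achievability, I choose orthonormal bases $\{\ket{e_{k,j}}\}_{j=1}^{d_k}$ of each $H_{k,1}$ and distribute $D = \lfloor \sum_k d_k/(1-\delta)\rfloor$ classical messages near-uniformly across the $\sum_k d_k$ mutually orthogonal states $\ketbra{e_{k,j}}{e_{k,j}}\otimes\omega_k$; the decoder first undoes the automorphism $\cT^t|_{\chi_\cT}$, then measures in the resulting orthogonal basis, and finally outputs a uniformly random message among those assigned to the observed basis state.

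For the quantum capacity the key point is that $\chi_\cT$ is block-diagonal, so the overall map $\cM := \cR\circ\cT^t\circ\cE$ decomposes as $\sum_k \cM_k$ with each $\cM_k: \cM_D\to\cM_D$ factoring through $\cM_{d_k}$ (via a sub-CPTP encoder $\cE_k$ into block $k$ composed with a CPTP decoder). Using the Kraus formula $F_E(\cN) = D^{-2}\sum_\alpha|\Tr K_\alpha|^2$ together with Cauchy--Schwarz on products of Kraus operators of $\cE_k$ and of the rest yields $F_E(\cM_k) \le q_k d_k/D$ with $q_k = \Tr(\cE_k(I_D))/D$ the weight in block $k$, so summing over $k$ gives $F_E(\cM) \le \max_k d_k/D$ and $D\le \max_k d_k/(1-\delta)$. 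For the achievability I work inside the block $k^\star$ realizing $\max_k d_k$: I choose a coisometry $W : \mb C^D \to H_{k^\star,1}$ (i.e., $WW^\dagger = I_{d_{k^\star}}$) and take $\cE(\rho) = W\rho W^\dagger\otimes\omega_{k^\star} + \Tr((I-W^\dagger W)\rho)\sigma_0$ paired with a decoder that inverts the automorphism, traces out $H_{\pi^t(k^\star),2}$, and applies $W^\dagger$. The resulting overall map is $\rho\mapsto P\rho P + \Tr((I-P)\rho)\rho_0$ with $P := W^\dagger W$ of rank $d_{k^\star}$, and a direct Kraus computation yields $F_E = d_{k^\star}^2/D^2$, giving the achievable rate $\lfloor d_{k^\star}/\sqrt{1-\delta}\rfloor$.

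The main obstacle I anticipate is the first reduction. Since the peripheral eigenvalues of $\cT$ need not be roots of unity, $\cT^t$ does not in general converge as $t\to\infty$; making rigorous the informal statement that $\cT^t$ effectively acts as the projection onto $\chi_\cT$ requires quantifying the decaying remainder $R_t$ and showing that a $(D,\delta)$ code for $\cT^t$ transfers to a $(D,\delta+o(1))$ code for the limiting peripheral model, with the $o(1)$ small enough to preserve the floor functions appearing in both capacity formulas.
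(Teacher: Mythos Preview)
Your proposal is correct and follows essentially the same approach as the paper: the same POVM/trace bound for the classical converse, the same Kraus--Cauchy--Schwarz argument for the quantum converse, and the same explicit code constructions for achievability (the paper's quantum achievability is phrased for several blocks simultaneously, but for the stated bound it reduces to your single-block construction). Regarding your flagged obstacle: your own decomposition already resolves it, since the spectral projection $P$ onto $\chi_{\cT}$ coincides with the peripheral projection channel $\cT_P$ (which is CPTP), and $\|R_t\|=\|\cT^t(I-\cT_P)\|\to 0$ holds regardless of whether the peripheral eigenvalues are roots of unity because the complementary spectral part has spectral radius strictly less than $1$; the paper makes the same reduction by instead passing to a subsequence $t_i$ with $\cT^{t_i}\to\cT_P$, and handles the floor via the right-continuity $\inf_{\delta'>\delta}\lfloor c/(1-\delta')\rfloor=\lfloor c/(1-\delta)\rfloor$.
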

    We note that, in independent work, the special case $\delta = 0$ of \eqref{eq:classical-cap-formula} and \eqref{eq:quantum-cap-bound} were proved in~\cite{singh2024-v2}.
    
    Continuous-time Markovian noise $\mc T_t$ are known as quantum Markov semigroups (QMS) generated by a Lindblad operator $\mathcal{L}$, i.e., $\mc T_t = e^{\mc L t}$~\cite{rivas2012open,GKS,lindblad1976generators}.
    Similarly, the results presented in Theorem~\ref{thm:classical and quantum capacity of repeated channel} hold for a QMS. Furthermore, the peripheral subspace of a QMS for any time can be expressed in terms of the spectrum of the generator $\mathcal{L}$ as follows:
\[
\chi_{e^{\mathcal{L} t}} = \text{span}\{ X \in \mathcal{M}_d \mid \exists \theta \in \mathbb{R}, \, \mathcal{L}(X) = i\theta X \}.
\]
For more details check Proposition~\ref{prop: Markov semigroup capacities}.

One of the important questions regarding any capacity is additivity under tensor product. Since the infinite-time capacity is determined by the peripheral subspace, it is essentially additive under the tensor product.
    
 \begin{theorem}
    \label{thm:additivity}
        For any two quantum channels $\mc T$ and $\mc S$, the infinite-time zero-error classical and quantum capacities are additive under the tensor product. Specifically, we have:
        \begin{align*}
             C_{0}^\infty\left(\mc T \otimes \mc S\right) &=  C_{0}^\infty\left(\mc T\right) +  C_{0}^\infty\left(\mc S\right), \\
              Q_{0}^\infty\left(\mc T \otimes \mc S\right) &=  Q_{0}^\infty\left(\mc T\right) + Q_{0}^\infty\left(\mc S\right).
        \end{align*}
       
    \end{theorem}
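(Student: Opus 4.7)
The plan is to reduce the theorem to the identity $\chi_{\mc T \otimes \mc S} = \chi_{\mc T} \otimes \chi_{\mc S}$ and then read off the multiplicity parameters from the canonical peripheral decomposition~\eqref{eq:peripheral-decomposition-1}. Setting $\delta = 0$ in Theorem~\ref{thm:classical and quantum capacity of repeated channel} yields $C_0^\infty(\mc T) = \log(\sum_k d_k)$ and $Q_0^\infty(\mc T) = \log(\max_k d_k)$. So it suffices to show that if the peripheral data of $\mc T$ and $\mc S$ are $\{d_k\}$ and $\{e_j\}$ respectively, then the peripheral data of $\mc T \otimes \mc S$ are $\{d_k e_j\}_{k,j}$, since then $\sum_{k,j} d_k e_j = (\sum_k d_k)(\sum_j e_j)$ and $\max_{k,j} d_k e_j = (\max_k d_k)(\max_j e_j)$, and $\log$ converts products to sums.

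The first step is to establish $\chi_{\mc T \otimes \mc S} = \chi_{\mc T} \otimes \chi_{\mc S}$. The inclusion $\supseteq$ is immediate: if $\mc T(X) = \lambda X$ and $\mc S(Y) = \mu Y$ with $|\lambda| = |\mu| = 1$, then $(\mc T \otimes \mc S)(X \otimes Y) = \lambda\mu\,(X \otimes Y)$ with $|\lambda\mu| = 1$. For $\subseteq$, I would use two structural facts: (i) every eigenvalue of a CPTP map has modulus at most $1$; (ii) on its peripheral subspace, a CPTP map is diagonalizable (no Jordan blocks), which follows directly from~\eqref{eq:peripheral-decomposition-1}. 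Combined with the spectral-mapping identity $\spec(\mc T \otimes \mc S) = \spec(\mc T) \cdot \spec(\mc S)$, a modulus-$1$ eigenvalue $\nu$ of $\mc T \otimes \mc S$ must factor as $\nu = \lambda\mu$ with $|\lambda| = |\mu| = 1$, and its eigenspace is spanned by tensor products of peripheral eigenvectors, giving the claim.

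Next, plugging in the canonical decompositions
\begin{equation*}
\chi_{\mc T} = 0 \oplus \bigoplus_k \mc M_{d_k} \otimes \omega_k, \qquad \chi_{\mc S} = 0 \oplus \bigoplus_j \mc M_{e_j} \otimes \sigma_j,
\end{equation*}
and using $\mc M_{d_k} \otimes \mc M_{e_j} \cong \mc M_{d_k e_j}$, the tensor product rearranges into
\begin{equation*}
\chi_{\mc T \otimes \mc S} = 0 \oplus \bigoplus_{k,j} \mc M_{d_k e_j} \otimes (\omega_k \otimes \sigma_j),
\end{equation*}
which is already in the canonical form required by~\eqref{eq:peripheral-decomposition-1}. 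Applying the $\delta = 0$ capacity formulas then yields both additivity statements.

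The main obstacle is the inclusion $\chi_{\mc T \otimes \mc S} \subseteq \chi_{\mc T} \otimes \chi_{\mc S}$: rigorously ruling out that a peripheral eigenvector of $\mc T \otimes \mc S$ mixes the peripheral part of one factor with a strictly contracting part of the other. The cleanest route is to write a block decomposition $\mc T = \mc T_{\text{per}} \oplus \mc T_{\text{dec}}$ (and similarly for $\mc S$) on an invariant splitting of $\mc M_d$, with $\mc T_{\text{dec}}$ having spectral radius strictly less than $1$; expanding $\mc T \otimes \mc S$ as a sum of four blocks shows that only the "per-per" block contributes modulus-$1$ spectrum, which confines the peripheral subspace of $\mc T \otimes \mc S$ to $\chi_{\mc T} \otimes \chi_{\mc S}$. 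I would finally remark that the restriction to $\delta = 0$ is essential, since the floor functions in~\eqref{eq:classical-cap-formula} and~\eqref{eq:quantum-cap-bound} preclude an analogous clean additivity at positive error.
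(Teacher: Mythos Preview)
Your proposal is correct and follows the same overall architecture as the paper: establish $\chi_{\mc T \otimes \mc S} = \chi_{\mc T} \otimes \chi_{\mc S}$, tensor the two canonical decompositions together, and apply the $\delta=0$ case of Theorem~\ref{thm:classical and quantum capacity of repeated channel}. The only real difference is in how the nontrivial inclusion $\chi_{\mc T \otimes \mc S} \subseteq \chi_{\mc T} \otimes \chi_{\mc S}$ is argued. The paper avoids spectral bookkeeping altogether by passing through the \emph{peripheral projection} channels: one first shows $(\mc T \otimes \mc S)_P = \mc T_P \otimes \mc S_P$ using the limit characterization $\mc T_P = \lim_i \mc T^{t_i}$ (take the product sequence $m_i n_i k_i$), and then any fixed point $X$ of $(\mc T \otimes \mc S)_P$ satisfies $X = \sum_i \mc T_P(Y_i) \otimes \mc S_P(Z_i) \in \chi_{\mc T} \otimes \chi_{\mc S}$ in one line. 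Your route via the invariant splitting $\mc M_d = \chi_{\mc T} \oplus V_{\mathrm{dec}}$ (using semisimplicity of the peripheral spectrum) and the spectral-radius bound on the three mixed blocks of $\mc T \otimes \mc S$ is equally valid and arguably more elementary, since it only needs Jordan-form facts rather than the existence of $\mc T_P$ as a channel; the paper's route is shorter once the peripheral projection machinery from~\cite{Wolfe2012} is already in play. Your closing remark about the floor function obstructing additivity for $\delta>0$ is also apt and consistent with the paper's restriction to $\delta=0$ in this theorem.
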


   For the asymptotic scenario, considering the limit of the infinite tensor product of a channel, the following results hold:
\begin{proposition}\label{prop: Shannon theory capacity}
        Let $\delta \in [0,1)$ and $\mc T: \mc M_d \to \mc M_d$ be a quantum channel. Then 
        \[ \lim_{m\to \infty}  \frac 1m C_{\delta}^\infty \left(\mc T^{\otimes m}\right)= C_0^\infty(\mc T),\]
        and
        \[ \lim_{m\to \infty}  \frac 1m Q_{\delta}^\infty\left(\mc T^{\otimes m}\right)= Q_0^\infty(\mc T).\]
    \end{proposition}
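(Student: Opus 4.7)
My plan is to reduce the proposition directly to the capacity formula in Theorem~\ref{thm:classical and quantum capacity of repeated channel}, the only substantive work being to read off the peripheral decomposition of $\mc{T}^{\otimes m}$ from that of $\mc{T}$.

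First I would establish that $\chi_{\mc{T}^{\otimes m}} = \chi_{\mc{T}}^{\otimes m}$. Since the peripheral spectrum of a quantum channel is diagonalizable (a standard fact: peripheral eigenvalues have no nontrivial Jordan blocks), a full basis of eigenoperators of $\mc{T}$ exists, and tensor products of such bases diagonalize $\mc{T}^{\otimes m}$; the operators whose eigenvalue has modulus one are exactly those whose factors are all peripheral for $\mc{T}$. Starting from $\chi_{\mc{T}} = 0 \oplus \bigoplus_{k=1}^K \mc{M}_{d_k} \otimes \omega_k$, expanding the $m$-fold tensor product, and reorganizing tensor factors gives
\[
\chi_{\mc{T}^{\otimes m}} \;\cong\; 0 \;\oplus\; \bigoplus_{(k_1,\dots,k_m) \in [K]^m} \mc{M}_{d_{k_1} \cdots d_{k_m}} \otimes \bigl(\omega_{k_1}\otimes \cdots \otimes \omega_{k_m}\bigr),
\]
so the peripheral block dimensions of $\mc{T}^{\otimes m}$ are exactly the products $d_{k_1}\cdots d_{k_m}$ indexed by $[K]^m$.

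Next I would plug these dimensions into Theorem~\ref{thm:classical and quantum capacity of repeated channel}. For the classical part, the total dimension sum is $\sum_{(k_1,\dots,k_m)} d_{k_1}\cdots d_{k_m} = \bigl(\sum_k d_k\bigr)^m$, so
\[
C_\delta^\infty(\mc{T}^{\otimes m}) = \log \left\lfloor \frac{(\sum_k d_k)^m}{1-\delta}\right\rfloor.
\]
Dividing by $m$ and taking $m\to\infty$, the $1-\delta$ denominator contributes $O(1/m)$ and the floor costs at most $O(1/m)$ as well (either $\sum_k d_k \geq 2$ and this is immediate, or $\sum_k d_k = 1$ and both sides are zero for every $m$), so the limit equals $\log \sum_k d_k = C_0^\infty(\mc{T})$.

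For the quantum part, the maximum block dimension of $\mc{T}^{\otimes m}$ is $(\max_k d_k)^m$, and the sandwich bound in \eqref{eq:quantum-cap-bound} gives
\[
\log\left\lfloor \frac{(\max_k d_k)^m}{\sqrt{1-\delta}}\right\rfloor \;\leq\; Q_\delta^\infty(\mc{T}^{\otimes m}) \;\leq\; \log\!\frac{(\max_k d_k)^m}{1-\delta}.
\]
Both bounds, divided by $m$, converge to $\log \max_k d_k = Q_0^\infty(\mc{T})$, concluding the proof by squeezing.

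The main technical point, and the only one that is not purely arithmetic, is the identification $\chi_{\mc{T}^{\otimes m}} = \chi_{\mc{T}}^{\otimes m}$ together with the explicit block structure; this is essentially the content of Theorem~\ref{thm:additivity} rewritten for $m$ factors, so if that theorem's proof is already in hand I would simply invoke it. Everything else is bookkeeping with floors and logarithms.
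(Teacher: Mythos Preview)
Your proposal is correct and follows essentially the same route as the paper: both arguments apply Theorem~\ref{thm:classical and quantum capacity of repeated channel} to $\mc T^{\otimes m}$, using that the peripheral block dimensions multiply under tensor product (which is exactly the content of Theorem~\ref{thm:additivity}, as you note), and then handle the floor and $(1-\delta)$ factors as $O(1)$ corrections that vanish after dividing by $m$. The only cosmetic difference is that the paper expresses the block sum as $2^{m C_0^\infty(\mc T)}$ via additivity rather than as $(\sum_k d_k)^m$, and your remark that ``both sides are zero for every $m$'' in the $\sum_k d_k = 1$ case is slightly imprecise (the pre-limit quantity can be a nonzero constant when $\delta \geq 1/2$), but it still goes to zero after dividing by $m$, so the argument stands.
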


To compute the infinite-time capacities of a given channel $\mathcal{T}$, it is crucial to determine the structure of its peripheral subspace. An algorithm is proposed in~\cite{blume2010information} that, in polynomial time, maps the structure of the peripheral subspace to the structure of a finite-dimensional von Neumann algebra. Consequently, our algorithm can also be utilized for determining the structure of finite-dimensional von Neumann algebras.
 
\begin{theorem}
    \label{thm:algorithm-structure}
        Let \(\mathcal{T}: \mathcal{M}_d \to \mathcal{M}_d\) be a quantum channel, and let \(\chi_{\mathcal{T}} = 0 \oplus \bigoplus_{k=1}^K \mathcal{M}_{d_k}\otimes \omega_k\) be the decomposition of its peripheral subspace as described in  \eqref{eq:peripheral-decomposition-1}.
        Algorithm~\ref{alg: Find the structure of chi}, takes the super-operator form of \(\mathcal{T}\) as input and returns a representation of the Hilbert space decomposition and the fixed density matrices \(\omega_k\) in time \(O(d^6 \log d)\).

    \end{theorem}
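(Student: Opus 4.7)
The plan is to organize the algorithm into three phases: (i) extract $\chi_{\mathcal{T}}$ as a vector subspace of $\mathcal{M}_d$; (ii) produce a fixed density $\sigma$ faithful on its support; (iii) decompose $\chi_{\mathcal{T}}$ into simple matrix blocks by Artin--Wedderburn, identifying the $\omega_k$ on the way. Each phase amounts to standard linear algebra on matrices of size at most $d^2 \times d^2$, and we only need to account the arithmetic carefully to reach $O(d^6 \log d)$.

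For phase (i), I would view $\mathcal{T}$ as a $d^2 \times d^2$ matrix and compute its peripheral spectral projector, not a full Jordan form. A convenient way is via repeated squaring combined with Ces\`aro averaging: since $\mathcal{T}$ is a contraction, $\mathcal{T}^{2^k}$ converges (after $O(\log d)$ squarings) to a map whose peripheral part is a direct sum of unitary rotations on a fixed subspace equal to $\chi_{\mathcal{T}}$, while non-peripheral generalized eigenspaces decay; applying a short polynomial in the resulting matrix annihilates the remaining root-of-unity phases and leaves the projector $P_{\chi_{\mathcal{T}}}$. Each squaring of a $d^2 \times d^2$ matrix costs $O(d^6)$, yielding $O(d^6 \log d)$ for this phase, and reading off a basis of $\chi_{\mathcal{T}}$ from the range of $P_{\chi_{\mathcal{T}}}$ is an $O(d^6)$ rank step.

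Phase (ii) uses that $P_{\chi_{\mathcal{T}}}(\mathbf{1}/d)$ is a trace-one fixed element of $\mathcal{T}$ whose support is exactly $H_0^{\perp}$; rescaling gives the required faithful $\sigma$. Phase (iii) exploits the known fact that $\chi_{\mathcal{T}}$ together with the product $X \star Y := X \sigma^{-1/2} Y \sigma^{-1/2}$ (with involution $X^*$) is a finite-dimensional $C^*$-algebra whose Artin--Wedderburn components correspond precisely to the factors $\mathcal{M}_{d_k} \otimes \omega_k$ in \eqref{eq:peripheral-decomposition-1}. I would then apply the classical decomposition pipeline: compute a basis of the center by solving the linear commutation system, diagonalize a generic central element to obtain the minimal central projections $e_k$ (this reads off $K$ and the supports $H_{k,1} \otimes H_{k,2}$), and within each central block determine $d_k$ from the multiplicity structure and extract $\omega_k$ by partial tracing $\sigma e_k$ over the factor on which the algebra acts as $\mathcal{M}_{d_k}$. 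All of this is dominated by the $d^2 \times d^2$ linear algebra, staying within $O(d^6)$.

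The step I expect to be most delicate is not the complexity accounting but the justification that phase (i)'s repeated-squaring construction genuinely isolates $\chi_{\mathcal{T}}$ using only $O(\log d)$ matrix multiplications, since a priori the order of the peripheral roots of unity can be superpolynomial; the fix is to combine squaring with a Ces\`aro averaging stage that flattens arbitrary phases into a single projector, and I would need to verify that this composite procedure is numerically stable and that the resulting $P_{\chi_{\mathcal{T}}}$ can be certified without a costly eigendecomposition. The remaining steps follow standard algorithms for decomposing finite-dimensional $^*$-algebras, whose correctness under the modified product $\star$ is what ultimately lets us read off both the integers $d_k$ and the densities $\omega_k$ in polynomial time.
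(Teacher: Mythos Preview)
Your overall architecture (isolate $\chi_{\mathcal T}$, turn it into a genuine $C^*$-algebra, then run a Wedderburn-type decomposition via the center) matches the paper's, but phase~(i) as you describe it does not work, and the source of the $\log d$ factor is misplaced.

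For phase~(i): the non-peripheral spectrum of a channel can have eigenvalues with modulus arbitrarily close to~$1$, so the number of squarings needed before the non-peripheral part of $\mathcal T^{2^k}$ becomes negligible is governed by the spectral gap, not by~$d$; there is no a~priori $O(\log d)$ bound. More seriously, Ces\`aro averaging of the iterates converges to the projector onto the eigenvalue-$1$ fixed space, \emph{not} onto the full peripheral subspace $\chi_{\mathcal T}$, and since peripheral eigenvalues of a channel need not be roots of unity (take $\mathcal T(X)=UXU^\dagger$ with a generic unitary~$U$), no fixed ``short polynomial'' in $\mathcal T$ can annihilate the peripheral phases while retaining the projector. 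The paper avoids all of this by computing the Jordan decomposition $\hat T=\sum_\ell \lambda_\ell P_\ell+N_\ell$ of the $d^2\times d^2$ super-operator once and setting $\hat T_P=\sum_{\ell:|\lambda_\ell|=1}P_\ell$; this is a single $O(d^6)$ step with no iteration and no gap dependence.

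The $\log d$ in the paper's bound does not come from phase~(i) at all; it comes from phase~(iii), specifically from finding minimal projections inside the non-commutative algebra $\mathcal A$. The paper's procedure takes a projection $P\in\mathcal A$, finds some generator $A_i$ with $PA_iP\notin\mathbb C P$, spectrally decomposes $PA_iP$, and keeps a spectral projector of trace at most $\tr(P)/2$; this halving is why each minimal projection costs $O(\log d)$ rounds. Your phase~(iii) sketch (``diagonalize a generic central element'' and ``determine $d_k$ from the multiplicity structure'') covers the commutative center but gives no concrete mechanism for extracting the tensor-product basis of each block $\mathcal M_{d_k}\otimes\omega_k$; that step, together with its cost accounting, is exactly the nontrivial algorithmic content of the theorem.
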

    \begin{remark}
        An implementation of this algorithm with examples can be found in~\cite{mostafa_taheri_2024_12821315}.
    \end{remark}

\subsection{Infinite-dimensional Hilbert spaces}
So far, we have focused on finite-dimensional Hilbert spaces. Now, let us shift our attention to the case of infinite-dimensional separable Hilbert spaces. Some concepts from finite-dimensional spaces do not extend directly to infinite dimensions, introducing new challenges and subtleties. To illustrate the differences in infinite-dimensional spaces, let us consider an example:

\begin{example}
    
    Let \(\{ |i\rangle \}_{i \in \mathbb{Z}}\) be an orthonormal basis for the Hilbert space \(H\), and let \(\Tr(H)\) denote the trace class space, i.e., the space of compact operators on \(H\) with finite trace norm. Now, consider the quantum channel \(\mathcal{T}: \Tr(H) \to \Tr(H)\) defined by 
    \[
    \mathcal{T}(X) = UXU^\dagger,
    \]
    where \(U = \sum_{i\in \mb Z} |i+1\rangle \langle i|\) is known as the bilateral shift operator. This operator shifts each basis state \(|i\rangle\) to \(|i+1\rangle\), effectively ``shifting the indices up by one."

    It is easy to see that the channel has an empty peripheral subspace~\cite{olkiewicz_1999}.  Despite the absence of a peripheral subspace, this channel preserves the distinguishability of input states, meaning that \(\tr(\rho \sigma) = \tr(\mathcal{T}(\rho) \mathcal{T}(\sigma))\), and thus its capacity is infinite despite the lack of a peripheral subspace.
\end{example}

    The example above makes it clear that the peripheral subspace is not, in general, the relevant subspace for determining long-time capacity in the infinite-dimensional case. 
   
    Instead, for an infinite-dimensional system, the isometric subspace of the channel \(\mathcal{T}\), denoted \(\Lambda_{\mathcal{T}}\), provides suitable properties~\cite{olkiewicz_1999}.
     This isometric subspace is defined as
    \begin{align*}
        \Lambda_{\mathcal{T}} := \{& x \in \operatorname{Tr}(H) :  \forall t \in \mathbb{N},\\
        &\| \mathcal{T}^t(x) \|_2 = \| \mathcal{T}^{*t}(x) \|_2 = \| x \|_2  \}.
    \end{align*}
   
\begin{remark}
    If \(\mc{T}\) is a quantum channel on a finite-dimensional Hilbert space, then its isometric subspace coincides with its peripheral subspace~\cite{olkiewicz_1999}.
\end{remark}
\begin{proposition}
    \label{prop: infinte-hilbert-capacity}
    Let \(\mathcal{T}: \operatorname{Tr}(H) \to \operatorname{Tr}(H)\) be a  completely positive, trace-preserving (CPTP) map, where \(H\) is an infinite-dimensional separable Hilbert space. Assume 
    \begin{itemize}
        \item $\mc T$ be contractive in  the  operator norm.
        \item \(\Lambda_{\mathcal{T}}\) is a finite-dimensional subspace (spanned by a finite number of generators).
    \end{itemize}
   
   Then, $\Lambda_\mathcal{T}$
   has the form in~\eqref{eq:peripheral-decomposition-1}.

   As a consequence, the classical and quantum capacities of channel $\mc T$ satisfy:
    \begin{equation}    
        \label{eq:classical-cap-formula infinite}
         \log
        \left(\left\lfloor \frac{\sum_k d_k}{1 - \delta} \right\rfloor \right)\leq C_{\delta}^\infty(\mathcal{T}) ,
    \end{equation}
    and 
    \begin{equation}
        \label{eq:quantum-cap-bound infinite}
        \log\left(\left\lfloor \frac{\max_k d_k}{\sqrt{1 - \delta}} \right\rfloor\right) \leq Q_{\delta}^\infty (\mathcal{T}).
    \end{equation}

\end{proposition}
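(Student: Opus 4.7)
The proof plan has two stages: first establish the algebraic structure of $\Lambda_{\mathcal{T}}$ by reducing to a finite-dimensional picture, then transport the achievability arguments from Theorem~\ref{thm:classical and quantum capacity of repeated channel} into the infinite-dimensional setting.

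For the structural stage, I would begin by verifying that $\Lambda_{\mathcal{T}}$ is a $*$-closed subspace of $\operatorname{Tr}(H)$ that is invariant under $\mathcal{T}$ and $\mathcal{T}^*$: both the involution and $\mathcal{T}$ preserve the defining equalities $\|\mathcal{T}^t(x)\|_2 = \|\mathcal{T}^{*t}(x)\|_2 = \|x\|_2$, using that $\mathcal{T}^*$ is the Hilbert--Schmidt adjoint of $\mathcal{T}$. Since $\Lambda_{\mathcal{T}}$ is finite-dimensional by assumption and $\mathcal{T}|_{\Lambda_{\mathcal{T}}}$ is a 2-norm isometry, a Jordan-form argument shows $\mathcal{T}|_{\Lambda_{\mathcal{T}}}$ is diagonalizable with unimodular eigenvalues, so $\Lambda_{\mathcal{T}}$ is spanned by genuine eigenvectors of $\mathcal{T}$. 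The next point is to show that every positive $\rho \in \Lambda_{\mathcal{T}}$ is finite-rank; this is where the operator-norm contractivity of $\mathcal{T}$ is used, together with the Olkiewicz decomposition of contractive CP maps cited as~\cite{olkiewicz_1999}, which forces the isometric part to be supported on a minimal projection of finite rank. Summing the supports of the finitely many basis elements of $\Lambda_{\mathcal{T}}$ then gives a finite-dimensional subspace $H_0 \subseteq H$ containing the support of every element of $\Lambda_{\mathcal{T}}$.

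Once the support is confined to $H_0$, the compression of $\mathcal{T}$ acts as a finite-dimensional channel on $\mathcal{M}_{\dim H_0}$ whose peripheral subspace is exactly $\Lambda_{\mathcal{T}}$ (since on finite dimensions the peripheral and isometric subspaces coincide). Applying the finite-dimensional structure theorem~\eqref{eq:peripheral-decomposition-1} then yields the decomposition
\[
\Lambda_{\mathcal{T}} = 0 \oplus \bigoplus_{k=1}^K \mathcal{M}_{d_k} \otimes \omega_k
\]
with respect to a decomposition $H_0 = \bigoplus_k (H_{k,1} \otimes H_{k,2})$ inside $H$.

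For the capacity bounds, the plan is to reuse the encoding/decoding constructions from the achievability part of Theorem~\ref{thm:classical and quantum capacity of repeated channel} verbatim. For the classical bound, I would encode $D = \lfloor \sum_k d_k/(1-\delta) \rfloor$ messages across the diagonal basis of the blocks $\mathcal{M}_{d_k} \otimes \omega_k$; for the quantum bound, I would encode a system of dimension $\lfloor \max_k d_k/\sqrt{1-\delta}\rfloor$ into the matrix factor of the largest block, tensored with $\omega_k$. In both cases the encoded state lies in $\Lambda_{\mathcal{T}}$, so $\mathcal{T}^t$ acts as a (Hilbert--Schmidt) unitary on it for every $t$ and admits a fixed inverse that the decoder applies; taking $t\to\infty$ preserves the fidelity analysis and yields~\eqref{eq:classical-cap-formula infinite} and~\eqref{eq:quantum-cap-bound infinite}.

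The main obstacle is the finite-rank claim: in infinite dimensions, a finite-dimensional subspace of $\operatorname{Tr}(H)$ need not consist of finite-rank operators, so one cannot simply quote the finite-dimensional structure theorem. Closing this gap requires the operator-norm contractivity hypothesis and the Olkiewicz-type decomposition, and is the only step that has no direct analogue in the finite-dimensional proof of Theorem~\ref{thm:classical and quantum capacity of repeated channel}.
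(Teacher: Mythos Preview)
Your overall strategy matches the paper's: use Olkiewicz's results to confine the support of $\Lambda_{\mathcal{T}}$ to a finite-dimensional subspace $H_0\subseteq H$, restrict $\mathcal{T}$ to a finite-dimensional channel there, and then invoke the finite-dimensional structure theorem and the achievability part of Theorem~\ref{thm:classical and quantum capacity of repeated channel}. The paper proceeds in exactly this order, citing~\cite[Proposition~5]{olkiewicz_1999} for the fact that $\Lambda_{\mathcal{T}}$ is generated by finite-rank projections (the point you correctly flag as the crux), and then extending the finite-dimensional encoder/decoder to $\Tr(H)$ just as you propose.

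There is, however, one step you pass over that the paper treats carefully and that is not automatic. You write that ``the compression of $\mathcal{T}$ acts as a finite-dimensional channel on $\mathcal{M}_{\dim H_0}$ whose peripheral subspace is exactly $\Lambda_{\mathcal{T}}$.'' For this you need $\mathcal{T}$ to leave $\Tr(H_0)\oplus 0$ invariant; otherwise the compression $X\mapsto P_{H_0}\mathcal{T}(X)P_{H_0}$ need not be trace-preserving, and its peripheral subspace need not be $\Lambda_{\mathcal{T}}$. Invariance of $\Lambda_{\mathcal{T}}$ itself is easy, but invariance of all of $\Tr(H_0)$ is not: the paper obtains it by an order-domination argument, namely that the sum $S$ of the (finitely many, finite-rank, positive) projection generators lies in $\Lambda_{\mathcal{T}}$, so $\mathcal{T}(S)\in\Lambda_{\mathcal{T}}\subseteq\Tr(H_0)$, and for any positive $x$ supported in $H_0$ one can find $\epsilon>0$ with $\epsilon x\leq S$, forcing $\epsilon\,\mathcal{T}(x)\leq\mathcal{T}(S)$ and hence $\operatorname{supp}\mathcal{T}(x)\subseteq H_0$. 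Without this step your reduction to a bona fide finite-dimensional channel is incomplete, and Proposition~\ref{prop:peripheral-space} cannot be applied to extract the decomposition~\eqref{eq:peripheral-decomposition-1}.
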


\section{Examples}
\label{sec: examples}
In this section, we present three examples to illustrate our theorem, each highlighting a different type of quantum dynamic. Example~\ref{ex: collective noise} considers $\mathcal{T}$ as noise affecting the system. Example~\ref{ex: cat code} examines an engineered dynamic designed to control the evolution. Finally, Example~\ref{ex: noise+ recovery} explores a scenario where noise, engineered dynamics, and a recovery process interact to restore the system.

\begin{example}\label{ex: collective noise}
    Let $\mathcal{T}$ be a collective noise that simultaneously affects multiple qubits due to uniform interactions with the environment. For an $n$-qubit system, $\mathcal{T}$ is given by the Kraus operators $\{ \frac{1}{\sqrt{3}} \exp(i \sigma^{(n)}_j)\}_{j\in \{x,y,z\}} $, where $$\sigma^{(n)}_j= \sum_{k=1}^n I^{\otimes k-1}_2 \otimes \sigma_j \otimes I^{\otimes n-k}_2,$$ and $\sigma_j$ are Pauli matrices~\cite{holbrook2003noiseless}. 

For a \textbf{4-qubit} system, the only eigenvalue with unit modulus is $1+0i$ (easily checked numerically), 
thus the peripheral subspace of \(\mc T\) is same as its fixed point subspace, and  has the structure \(\mathcal{M}_2 \oplus (\mathcal{M}_3 \otimes I_3) \oplus \mathbb{C} I_5\)~\cite{holbrook2003noiseless}. Therefore, by Theorem~\ref{thm:classical and quantum capacity of repeated channel}  its long-time classical and quantum capacity are 
\(
C_\delta^\infty(\mc T) =\log \left\lfloor \frac{6}{1 - \delta} \right\rfloor\text{, and } \quad  Q_{\delta}^\infty (\mc T) \simeq \log\left(\frac{3}{1 - \delta}\right)
\).

\end{example}

    \begin{example}[Cat code]\label{ex: cat code}
    In quantum information, \textit{cat codes} are an approach to encode qubits in coherent superpositions of photon-number states, known as \textit{cat states}. These states are particularly valuable for creating robust qubits that resist certain types of noise, which is crucial for quantum error correction~\cite{mazyar_lecture,albert2022_lecture}. 

    To construct an \(n\)-component cat code, the system must be engineered so that the system and environment exchange photons while preserving a fixed photon-number parity modulo \(n\)~\cite{mazyar_lecture}. This evolution can be modeled by a QMS with a generator \(\mc{L}\), where the jump operator is given by \(L = a^n - \alpha^n I\), with \(a\) as the photon annihilation operator and \(\alpha \in \mathbb{C}\) as a parameter that controls the steady-state properties.

This dissipative process gradually drives the system toward the \(n\)-component cat code subspace~\cite{albert2022_lecture}. As a result, \textit{cat codes} exhibit resilience against common errors, such as dephasing, allowing them to maintain coherence over extended timescales.

The peripheral subspace of $n$-photon driven  dissipation evolution  
is identical to its fixed-point subspace and is given by  
\[
\chi_{e^{\mc{L}t}} = \operatorname{span} \{ |\beta\rangle \langle \beta'| : \; \beta^n = \beta'^n = \alpha^n \},
\]  
where \(|\beta\rangle\) is a coherent state with parameter \(\beta\)~\cite{albert2022_lecture, mazyar_lecture}. Although the coherent states \(|\beta\rangle\) are not orthogonal to each other, the set of orthogonal states \(\{|\psi_i\rangle\}_{i=0}^{n-1}\), defined as superposition of coherent states as  
\begin{equation} \label{eq: def basis of cat code}
    |\psi_i\rangle = \sum_{j=0}^{n-1} \omega^{ij} |\omega^j |\alpha|\rangle,
\end{equation}  
where \(\omega= \exp(2\pi i/n)\) , satisfies  
\[
\chi_{e^{\mc{L}}} = \operatorname{span}\{ |\psi_i\rangle \langle \psi_j| : i,j = 0, \dots, n-1 \}.
\]  
Thus, the peripheral subspace has the structure \(0 \oplus \mc{M}_n \otimes 1\).

Although Proposition~\ref{prop: infinte-hilbert-capacity} and Theorem~\ref{thm:classical and quantum capacity of repeated channel} are not applicable for
this family of Markov processes—due to the infinite dimensionality of the Hilbert space and the non-contractivity of the processes in the operator norm—the fact that this evolution drives system into $\chi_{\mc L} $~\cite{albert2022_lecture} allows us to leverage the results of Theorem~\ref{thm:classical and quantum capacity of repeated channel}.
Consequently, we have  
\(
C_\delta^\infty(e^{\mc L t}) = \log \left(\left\lfloor \frac{n}{1 - \delta} \right\rfloor \right)\), and \(
 Q_{\delta}^\infty (e^{\mc L t}) \simeq  \log\left(\frac{n}{1 - \delta}\right).
\)
 
\end{example}

\begin{remark}
Although the Hilbert space of the cat code in Example~\ref{ex: cat code} is infinite-dimensional, the coefficients of coherent states in the Fock basis decay exponentially. Thus, we can apply the algorithm in Theorem~\ref{thm:algorithm-structure} by truncating the number of photons, achieving a good accuracy (see~\cite{mostafa_taheri_2024_12821315}).
\end{remark}

\begin{example}\label{ex: noise+ recovery}
    
Photon loss, a common noise in bosonic systems,  is modeled as a Lindblad process with the jump operator \(a\) (the photon annihilation operator)~\cite{albert2022_lecture, mazyar_lecture}. Photon loss disrupts the structure of the cat code by altering the photon-number parity, which is a key property for the code's error-correcting capabilities.

To approximate this noise over a finite time interval, we consider a simplified model where, with probability \(p\), no photon is lost, and with probability \(1 - p\), exactly one photon is lost. The noise channel \(\mathcal{N}\) is described by the Kraus operators:  
\( N_1 = \sqrt{p} I \),  
\( N_2 = \sqrt{1 - p} \sum_{m=1}^{\infty} |m - 1\rangle \langle m| \), and  
\( N_3 = \sqrt{1 - p} |0\rangle \langle 0| \).

A recovery channel \(\mathcal{R}\) can be defined to counteract the effects of photon loss. The Kraus operators for \(\mathcal{R}\), in a simplified version of those given in~\cite{Mazyar_Autonomous_Protection}, are:  
\( R_1 = \sum_{m=0}^\infty |2m+1\rangle \langle 2m| \) and  
\( R_2 = \sum_{m=0}^\infty |2m+1\rangle \langle 2m+1| \).

Let the noise channel \(\mathcal{N}\) and the recovery channel \(\mathcal{R}\) act on the system at each time interval \(\tilde{t}\). Then, the evolution of the system in each time interval \(\tilde{t}\) is given by:  
\[
\mathcal{T} = \mathcal{R} \circ \mathcal{N} \circ e^{\mathcal{L} \tilde t},
\]
where \(\mathcal{L}\) is the generator of the QMS describing the system's dynamics.

Let us focus on the 4-component cat code setup (\(n=4\)). One can verify that  
\[
\chi_\mc{T} = \operatorname{span}\{ |\psi^i\rangle\langle\psi^j| : i,j \in \{1,3\} \},
\]  
where \(|\psi^i\rangle\) are defined as in \eqref{eq: def basis of cat code}.  
As a result, instead of having zero capacity in the long-time limit due to noise, the system achieves a classical capacity of  
\[
C_\delta^\infty(\mathcal{T}) = \log \left\lfloor \frac{2}{1 - \delta} \right\rfloor,
\]  
and a quantum capacity satisfying  
\[
Q_\delta^\infty(\mathcal{T}) \geq \log \left\lfloor \frac{2}{\sqrt{1 - \delta}} \right\rfloor.
\]

    \end{example}

 \section{Conclusion }
In summary, we have shown that in the setting of infinite time, the classical and quantum capacities are essentially given by the zero-error capacities and can be efficiently computed, unlike the fixed time setting. We see this result as a first step towards understanding channel capacities for quantum evolutions. It would be interesting to study the behavior of capacities for finite $t$ both for active and passive error correction.

 \section*{Acknowledgements}
    We acknowledge support from the European Research Council (ERC Grant AlgoQIP, Agreement No. 851716). MR is supported by the Marie Sk\l odowska- Curie Fellowship from the European Union’s Horizon Research and Innovation programme, grant Agreement No. HORIZON-MSCA-2022-PF-01 (Project number: 101108117).
    We would like to thank Nilanjana Datta for a talk at the ``Quantum information" conference at the SwissMAP Research Station in Les Diablerets on~\cite{singh2024-v1} which stimulated this work.

 \bibliographystyle{plain}

    \bibliography{reference}

\newpage

    \appendix

    \section{Infinite-time capacities }
    As previously mentioned, the peripheral subspace of the channel plays a crucial role in determining the infinite-time capacity. Before diving into the proof, let us first highlight some of its key properties.
        \begin{proposition}\cite{Wolfe2012}
    \label{prop:peripheral-space}
    The peripheral subspace of a quantum channel $\mc{T}$ is defined by
    \[
    \chi_{\mc{T}} = \mathrm{span}\{ X \in \mc{M}_d : \mc{T}(X) = \lambda X, \text{ s.t. } |\lambda| = 1\}.
    \]
    It satisfies the following:
    \begin{itemize}
    \item There exists a direct sum decomposition of the Hilbert space $\mb{C}^d = H_0 \oplus \bigoplus_{k=1}^K (H_{k,1} \otimes H_{k,2})$ for some nonnegative integer $K$ such that 
    \begin{align}
    \label{eq:peripheral-decomposition}
        \chi_{\mc{T}} = 0 \oplus \bigoplus_{k=1}^K \mc{M}_{d_k} \otimes \omega_k,
    \end{align}
    where $\mc{M}_{d_k}$ is the full matrix algebra on $H_{k,1}$, with $d_k = \dim H_{k,1}$, and $\omega_k$ is a density operator on $H_{k,2}$.
    \item There exists unitaries $U_k$ acting on $H_{k,1}$ and a permutation $\pi$ which permutes systems $\{1, \dots, K\}$ having the same dimension such that for every $X \in \chi_{\mc{T}}$ of the form $X = 0 \oplus \bigoplus_{k=1}^K x_k \otimes \omega_k$, we have
    \begin{align*}
    \mc{T}(X) = 0 \oplus \bigoplus_{k=1}^K U_k x_{\pi^{-1}(k)} U_{k}^{\dagger} \otimes \omega_k.
    \end{align*}
    \end{itemize}
    \end{proposition}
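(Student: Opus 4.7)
The plan is to build the decomposition by constructing a distinguished stationary state, identifying $\chi_{\mc T}$ with a finite-dimensional $*$-algebra via Kadison's Schwarz inequality, and finally invoking Artin--Wedderburn.

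I would first produce a faithful fixed point on the relevant support. Set
\[
\sigma := \lim_{N\to\infty}\frac{1}{N}\sum_{t=0}^{N-1}\mc T^t(I/d);
\]
this limit exists in finite dimensions because the Ces\`aro average annihilates all nontrivial unit-modulus eigenvalues, and $\sigma$ is a stationary state of $\mc T$. Let $H_0 := \ker\sigma$, so that $\sigma$ is faithful on $H_0^\perp$; trace preservation together with $\mc T(\sigma)=\sigma$ prevent any peripheral eigenvector from having support touching $H_0$, identifying the zero summand of the decomposition.

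Next I would invoke Kadison's Schwarz inequality for the Heisenberg map $\mc T^*$. For an eigenvector $Y$ of $\mc T^*$ with unit-modulus eigenvalue, the bound $\mc T^*(Y^*Y)\geq \mc T^*(Y)^*\mc T^*(Y) = Y^*Y$, combined with the identity $\Tr(\sigma\,\mc T^*(Y^*Y))=\Tr(\sigma Y^*Y)$ coming from stationarity and the faithfulness of $\sigma$ on $H_0^\perp$, forces pointwise equality, placing $Y$ in the multiplicative domain of $\mc T^*$. The multiplicative domain is a $*$-subalgebra of $\mc M_d$ on which $\mc T^*$ acts as a $*$-homomorphism. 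By duality (the eigenvalues of $\mc T$ and $\mc T^*$ are complex conjugates, and the eigenspaces correspond via the modular map $Y\mapsto \sigma^{1/2}Y\sigma^{1/2}$), this $*$-algebra structure transfers to $\chi_{\mc T}$, on which $\mc T$ acts as a $*$-isomorphism. Artin--Wedderburn then yields $\chi_{\mc T}\cong \bigoplus_{k=1}^K \mc M_{d_k}$, and unpacking the representation of this algebra on $H_0^\perp$ produces the factorization $\bigoplus_k H_{k,1}\otimes H_{k,2}$, with $\omega_k$ emerging as the restriction of $\sigma$ to the multiplicity factor $H_{k,2}$.

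For the second bullet, $\mc T$ restricted to $\chi_{\mc T}$ is a bijection preserving the $\sigma$-weighted inner product (all its eigenvalues lie on the unit circle), hence a unitary on this finite-dimensional space, and is simultaneously a $*$-automorphism of $\bigoplus_k \mc M_{d_k}$; any such automorphism must permute the simple summands of equal dimension and act within each orbit by conjugation with a unitary, yielding the permutation $\pi$ and the unitaries $U_k$. The step I expect to be the main obstacle is making the multiplicative-domain argument fully rigorous: saturating the Schwarz inequality to obtain the algebra structure hinges on the faithfulness of $\sigma$ on $H_0^\perp$, and the duality passage through the modular map must be handled carefully so that the structure lands on $\chi_{\mc T}$ in exactly the form $\bigoplus_k \mc M_{d_k}\otimes\omega_k$ stated. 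Once these technicalities are settled, the rest follows from standard representation theory of finite-dimensional $*$-algebras.
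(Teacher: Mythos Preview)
The paper does not supply its own proof of this proposition; it is quoted verbatim from the cited reference~\cite{Wolfe2012} (Wolf's lecture notes on quantum channels) and used as a black box throughout the appendix. Your sketch is essentially the standard argument one finds there: Ces\`aro-average to get a maximal-rank stationary state, use Kadison--Schwarz for the unital dual $\mc T^*$ to place peripheral eigenvectors in the multiplicative domain, invoke Artin--Wedderburn on the resulting finite-dimensional $*$-algebra, and pull the structure back to $\chi_{\mc T}$ via $Y\mapsto\sigma^{1/2}Y\sigma^{1/2}$. The identification of the action of $\mc T$ on $\chi_{\mc T}$ as a $*$-automorphism of $\bigoplus_k\mc M_{d_k}$, hence a permutation of isomorphic blocks followed by inner automorphisms, is likewise the standard route to the second bullet.

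One point to tighten: the sentence ``trace preservation together with $\mc T(\sigma)=\sigma$ prevent any peripheral eigenvector from having support touching $H_0$'' is not yet justified. What trace preservation and stationarity give directly is that every \emph{fixed point} of $\mc T$ has support contained in $\operatorname{supp}\sigma$; extending this to arbitrary peripheral eigenvectors requires precisely the Schwarz/multiplicative-domain machinery you invoke afterwards (or, equivalently, passing first to $\mc T^*$ and then transporting via the modular map). So the order of the argument should be: restrict $\mc T$ to $B(\operatorname{supp}\sigma)$, where $\sigma$ is faithful, run the Kadison--Schwarz saturation there, and only then conclude the support statement. You have correctly flagged this as the delicate step; once the restriction is done carefully, the rest is routine.
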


\subsection{Proof of Theorem~\ref{thm:classical and quantum capacity of repeated channel}}
    \begin{proof}
    We start with the achievability statement, i.e., lower bounds on the capacities. Using the decomposition of the space in Proposition~\ref{prop:peripheral-space}, $\mb{C}^d = H_0 \oplus \bigoplus_{k=1}^K (H_{k,1} \otimes H_{k,2})$, let $\{\ket{e_{k,j}}\}_{j=1}^{d_k}$ be orthonormal bases of $H_{k,1}$. For different $H_{k,1}$ having the same dimension, we will identify the orthonormal bases.

    We define the classical code $\mc{E}$ as follows.  Let $\sigma : \left\{1, \dots, \sum_{k=1}^K d_k\right\} \to \{(k,j) : k \in \{1, \dots, K\}, j \in \{1, \dots, d_k\}\}$ be an arbitrary bijection, then 
    \begin{align*}
    \mc{E}(\proj{i}) = 
    \left\{\begin{array}{cc} \proj{e_{\sigma(i)}} \otimes \omega_k & \text{if } 1 \leq i \leq \sum_{k=1}^K d_k \\
    \proj{e_{1}} \otimes \omega_k & \text{if } \sum_{k=1}^K d_k < i \leq D.
    \end{array} \right.
    \end{align*}
    For $1 \leq i \leq \sum_{k=1}^{K} d_k$, we have $\cE(\proj{i}) = \proj{e_{k,j}} \otimes \omega_k \in B(H_{k,1} \otimes H_{k,2})$ with $(k,j) = \sigma(i)$. As a result, applying $\cT$, we get
    \begin{align*}
    \cT(\cE(\proj{i})) &= U_{\pi(k)} \proj{e_{k,j}} \otimes \omega_{\pi(k)} U_{\pi(k)}^\dagger\\
    &\quad  \in B(H_{\pi(k),1} \otimes H_{\pi(k),2}),
\end{align*}

    using Proposition~\ref{prop:peripheral-space}.
    Composing $t$ times the map $\cT$, we get

   \begin{align*}
    \cT^t&(\cE(\proj{i})) = \\
    &U_{\pi^t(k)} \cdots U_{\pi(k)} \proj{e_{k,j}} 
    U^{\dagger}_{\pi(k)} \cdots U^{\dagger}_{\pi^t(k)} 
     \otimes \omega_{\pi^t(k)} \\
    & \in B(H_{\pi^t(k),1} \otimes H_{\pi^t(k),2}).
\end{align*}

    Note that we always have $\cT^t(\cE(\proj{i})) \in \chi_{\cT}$. For a time $t$, we choose a recovery map $\cR$ defined by

    \begin{align*}
    \cR(X) = \sum_{k=1}^K &\bigg[ V_k U^{\dagger}_{\pi(k)} \cdots U^{\dagger}_{\pi^t(k)} \\
    &\quad\tr_{H_{\pi^t(k),2}}(P_{\pi^t(k)} X P_{\pi^t(k)}) \\
    &\quad \quad U_{\pi^t(k)} \cdots U_{\pi(k)} V_{k}^{\dagger}\bigg],
    \end{align*}
    where $P_{k}$ is the orthogonal projection onto $H_{k,1} \otimes H_{k,2}$ and $V_{k} = \sum_{j=1}^{d_k} \ketbra{\sigma^{-1}(k,j)}{e_{k,j}}$. Clearly, an error occurs only if $i > \sum_{i=1}^K d_K$ and as such the success probability is given by
    \begin{align*}
    \frac{1}{D} \sum_{i=1}^D \bra{i} \cR \circ \cT^t \circ \cE( \proj{i} ) \ket{i} = \min(1, \frac{\sum_{k=1}^D d_k }{D}).
    \end{align*}
    As a result, if we choose $D = \floor{\frac{\sum_{k=1}^{K} d_k}{1-\delta}}$, we obtain the desired achievability.

    Now let us construct a quantum code of dimension $D$ satisfying the condition $\max_{(\gamma_k)_k} \{ \frac{\sum_{k=1}^{K} \gamma_k^2}{D^2} \} \geq 1-\delta$.
   
    Let $\gamma_k$ achieve the maximum. It is convenient to label the basis of $\mb{C}^D$ by elements in $S \cup S'$, where $S = \{(k,j) : k \in \{1, \dots, K\}, j = 1 \in \{1, \dots, \gamma_k\}\}$ and $S' = \{1, \dots, D-\sum_{k=1}^{K} \gamma_k\}$. By the condition $\sum_{k = 1}^K \gamma_k \leq D$, the set $S$ is of size at most $D$.
    
    In order to define our encoding map $\cE$, we first define the Kraus operators $E_{k} = \sum_{j=1}^{\gamma_k} \ketbra{e_{k,j}}{(k,j)}$ for $k = 1$ to $K$, and $F_i = \ketbra{e_{(1,1)}}{i}$ for $i \in S'$.
    Then for $X \in \cM_{D}$ we define $\cE(X) = \sum_{k=1}^{K} (E_k X E_k^{\dagger}) \otimes \omega_k + \sum_{i \in S'} F_i X F_i^{\dagger} \otimes \omega_1$. As a result, we have
    \[
     \cE(\ketbra{(k,j)}{(k,j')}) = \ketbra{e_{k,j}}{e_{k,j'}} \otimes \omega_k,
    \]
    and $\cE(\ketbra{(k,j)}{(k',j')}) = 0$ for $k \neq k'$. For $i, i' \in S'$ we get
    \[
     \cE(\ketbra{(k,j)}{i}) = 0,
    \]
    \[
     \cE(\ketbra{i}{i}) = \ketbra{e_{1,1}}{e_{1,1}} \otimes \omega_1 \quad \text{, and }\]\[ \quad \cE(\ketbra{i}{i'}) = 0 \text{ for } i\neq i'.
    \]        
    Thus, applying $\cT^t$, we get

    \begin{align*}
        (\cT^t \circ \cE)&(\ketbra{(k,j)}{(k,j')}) =\\
       & U_{\pi^t(k)} \cdots U_{\pi(k)} \ketbra{e_{k,j}}{e_{k,j'}} U_{\pi(k)}^{\dagger} \cdots U_{\pi^t(k)}^{\dagger}\\
       & \quad \otimes \omega_{\pi^t(k)}.
    \end{align*}
    Note that $\cT^t \circ \cE(\ketbra{(k,j)}{(k',j')}) \in \chi_{\cT}$. We choose $\cR$ as

    \begin{align*}
        \cR(X) = \sum_{k=1}^K& \bigg[V_k  U^{\dagger}_{\pi(k)} \cdots U^{\dagger}_{\pi^t(k)} \\
        &\quad \tr_{H_{\pi^t(k),2}}(P_{\pi^t(k)} X P_{\pi^t(k)})\\
        & \quad \quad U_{\pi^t(k)} \cdots U_{\pi(k)} V_k^{\dagger}\bigg],
    \end{align*}
    where $P_k$ is the orthogonal projection onto $H_{k,1} \otimes H_{k,2}$ and $V_k = \sum_{j=1}^{\gamma_k} \ketbra{(k,j)}{e_{k,j}}$.
    
    Then we can compute the entanglement fidelity as
    \begin{align*}
    &\bra{\Phi^+} \left(\mc{I} \otimes \mc R\circ \mc T^t \circ \mc E \right)\left(|\Phi^+\ra\la \Phi^+|\right) \ket{\Phi^+} \\
    &\geq \frac{1}{D^2} \sum_{(k,j), (k',j') \in S}\bigg[\\ 
    &\;\;\;\;\bra{(k,j)} \cR \circ \cT^t \circ \cE(
   \ketbra{(k,j)}{(k',j')} ) \ket{(k',j')} \Bigg]\\
    &= \frac{1}{D^2} \sum_{k=1}^{K} \sum_{j,j'=1}^{\gamma_k} 1 \\    
    &= \frac{\sum_{k=1}^K \gamma_k^2 }{D^2} \\
    &\geq 1-\gamma, 
    \end{align*}
    which proves the desired result.

    We now move to the converse bound. For this, it is convenient to consider the peripheral projection channel $\mc{T}_P$~\cite[Proposition 6.3]{Wolfe2012} which satisfies the following properties:~\cite{Wolfe2012}
        \begin{itemize}
            \item $\mc T_P(\mc M_d)= \chi_{\mc T}$
            \item there exist an increasing sequence $\{t_i\}$ such that $\lim_{i\to \infty} \mc T^{t_i} = \mc T_P$
            \item $\mc T_P(X) = X$ for any $X \in \chi_\mc T$.
        \end{itemize}

    Note that if $\cE$ is a $(D, \delta)$ code for $\cT$, and $\|\cT - \cT'\|_{\diamond} \leq \eta$, where the diamond norm is defined as $\| \cT - \cT' \|_{\diamond} = \sup_{\rho} \|(\mc{I} \otimes (\cT - \cT'))(\rho) \|_1$, then $\cE$ is also a $(D, \delta + \eta)$ code for $\cT'$. 
    
    Consider a sequence $t_i$ such that $\lim_{i \to \infty} \cT^{t_i} = \cT_P$ and let $\eta_i = \| \cT^{t_i} - \cT_P \|_{\diamond}$. Let $\cE_i$ be a $(D, \delta)$ code for the channel $\cT^{t_i}$. Then $\cE_i$ is also a $(D, \delta + \eta_i)$ code for the channel $\cT_{P}$. Thus $C_{\delta}(\cT^{t_i}) \leq C_{\delta + \eta_i}(\cT_{P})$.
    Taking the limit $i \to \infty$, we have that $\sup_{\delta' < \delta} C_{\delta'}(\cT_P)\leq C^{\infty}_{\delta}(\cT) \leq \inf_{\delta' > \delta} C_{\delta'}(\cT_P)$. The exact same result holds for the quantum capacity as well. It now suffices to find upper bounds on the capacities of the channel $\cT_P$.

    First, let us show that the classical and quantum capacities of $\cT_P$ and $\sum_{k=0}^K \cA_k \circ \cT_P$ are the same, where we define $\cA_k$ as follows.
    Define $P_0$ to be the orthogonal projector onto $H_0$ and $\cA_0(X) = P_0 X P_0$. Then define $P_k$ to be the orthogonal projector onto $H_{k,1} \otimes H_{k,2}$ and $\cA_k : B(\mb{C}^d) \to B(H_{k,1})$ by $\cA_k(X) = \tr_{H_{k,2}}(P_k X P_k)$. Then $\sum_{k=0}^K \cA_k \circ \cT_P$ is clearly a quantum channel. As $\cT_P(\cM_d) = \chi_{\cT}$, $\sum_{k=0}^K \cA_k \circ \cT_P = \sum_{k=1}^K \cA_k \circ \cT_P$ so  $\sum_{k=1}^K \cA_{k} \circ \cT_P$ is a quantum channel mapping $B(\mb{C}^d)$ to $\bigoplus_{k=1}^K B(H_{k,1})$.  
    The inequality $C_{\delta}(\sum_{k=1}^K \cA_k \circ \cT_P) \leq C_{\delta}(\cT_P)$ is clear. For the other inequality, let $\cE$ be a code for $\cT_P$ and $\cR$ be a corresponding recovery map. We define the recovery map $\cR' = \sum_{k=1}^K \cR \circ \cB_k$ where $\cB_k : B(H_{k,1}) \to B(H_{k,1} \otimes H_{k,2})$ and maps $\cB_k(x_k) = x_k \otimes \omega_k$. It is easy to see that
    \[
    \cR' \circ \sum_{k=1}^K \cA_k \circ \cT_P \circ \cE = \cR \circ \cT_P \circ \cE,
    \]
    which proves the desired statement. For this reason, in what follows, we assume that $\dim H_0 = 0$ and $\dim H_{k,2} = 1$ and the space decomposes as $\mb{C}^d = \bigoplus_{k=1}^K H_{k,1}$.

    Let $\mc E$ be a $(D,\delta)$ classical code for $\mc T_P$ and $\mc R$ a corresponding recovery channel. Note that because $\cT_P(\mc{M}_d) = \chi_{\cT}$, for any density operator $\rho \in B(\mb{C}^d)$, there exists positive operators $\rho_k$ on $H_{k,1}$, $\cT_P(\rho) = \bigoplus_{k=1}^K \rho_k$. In addition, as $\cT_P$ is trace-preserving $\sum_{k} \tr(\rho_k) = 1$ and $\rho_k \leq I_{H_{k,1}}$, where $I_{H_{k,1}}$ is the identity on $H_{k,1}$. As a result, $\cT_P(\rho) \leq  \bigoplus_{k=1}^K I_{H_{k,1}}$. Thus,
    \begin{align*}
            \frac 1D\sum_{i=1}^{D} \bra{i} \mc R \circ \cT_P \circ \mc E(\proj{i}) \ket{i} 
            &\leq \frac 1D \sum_{i=1}^D \bra{i} \cR \left( \bigoplus_{k=1}^K I_{H_{k,1}} \right) \ket{i} \\
             &= \frac 1D \tr\left( \bigoplus_{k=1}^K I_{H_{k,1}} \right)\\
            & = \frac{\sum_{k=1}^K d_k}{D}.
    \end{align*}
    Thus, for any $(D, \delta)$ code, we should have $\frac{\sum_{k=1}^K d_k}{D} \geq 1-\delta$ which implies $D \leq \floor{\frac{\sum_{k=1}^K d_k}{1-\delta}}$. This implies that 
    $C_{\delta}(\cT_P) \leq \floor{\frac{\sum_{k=1}^K d_k}{1-\delta}}$. As a result $C^{\infty}_{\delta}(\cT) \leq \floor{\frac{\sum_{k=1}^K d_k}{1-\delta'}}$ for any $\delta' > \delta$ which gives the desired result.

    Let us now move to the quantum capacity. Let $\mc E$ be a $(D,\delta)$ classical code for $\mc T_P$ and $\mc R$ a corresponding recovery channel. 
    
    In order to compute the entanglement fidelity, recall that the  entanglement fidelity of channel $\mc E$ with Kraus operators $\{E_i\}$ is given by $F_E(\mc E) = \sum_i \left|\tr(E_i)\right|^2$. Let us introduce Kraus operators for $\{E_j\}_j$ for $\cT_P \circ \cE$, $\{R_{i}\}_i$ for $\cR$. As $\cT_P(\cM_d) = \chi_{\cT}$, the operator $\{P_k E_j\}_{k, j}$ are also Kraus operators for the map $\cT_P \circ \cE$, where we recall that $P_k$ is the projector onto $H_{k,1}$.

    Then we have
            \begin{align*}
                F_E(\mc R \circ \cT_P \circ \mc E) &= \frac {1}{D^2}\sum_{i,j,k}\left|\tr(R_{i} P_k E_{j})\right|^2.
            \end{align*}
    Let us denote $\alpha_k^2 = \sum_{i,j} \left|\tr(R_{i}P_k E_{j})\right|^2$. We show two properties on $\alpha_k$: $\alpha_k^2 \leq d_k \beta_k$ with $\sum_{k} \beta_k = D$ and $\alpha_k^2 \leq d_k^2$. We start with the first property
            \begin{align*}
                \alpha_k^2 &\leq \sum_{i,j} \tr( R_{i} P_k P_k^{\dagger} R_{i}^{\dagger} ) \tr(E_{j}^{\dagger} P_k^{\dagger} P_k E_{j})\\
                &= \tr(P_k) \sum_{j} \tr\left( E_{j}^{\dagger} P_k E_j  \right).
            \end{align*}    
For the first inequality, we used the Cauchy-Schwarz inequality $|\tr(A^{\dagger} B)|^2 \leq \tr(A^{\dagger} A) \tr(B^{\dagger} B)$. Then we used the fact that $\cR$ is trace preserving. Now note that because $\cT_P \circ \cE$ is a quantum channel, we have $\sum_{k,j} \tr\left(  E_{j}^{\dagger} P_{k}^{\dagger} P_k E_j \right) = D$. Calling $\beta_k = \sum_{j} \tr\left( E_{j}^{\dagger} P_{k}^{\dagger} P_k E_j \right)$, we proved the first claimed inequality. 

To show the second inequality $\alpha_k^2 \leq d_k^2$, we write
            \begin{align*}
                \alpha_k^2 &\leq \sum_{i,j} \tr( E_j R_{i} P_k P_k^{\dagger} R_{i}^{\dagger} E_j^{\dagger} ) \tr( P_k^{\dagger} P_k )\\
                &= \tr(P_k) d_k \\
                &= d_k^2 .
            \end{align*} 
where we used again the Cauchy-Schwartz inequality and the fact that $\cT_P \circ \cE \circ \cR$ is trace-preserving. Defining $\gamma_k = \min(d_k, \beta_k)$, we have that $\sum_{k=1}^K \gamma_k \leq D$ and $\gamma_k \leq d_k$ and the entanglement fidelity can be bounded as 
\begin{align*}
                F_E(\mc R \circ \cT_P \circ \mc E) 
                &\leq \frac {1}{D^2}\sum_{k=1}^K \alpha_k^2 \\
                &\leq \frac{1}{D^2}\sum_{k=1}^K d_k \gamma_k ,
\end{align*}
which concludes the proof of~\eqref{eq:quantum-cap-formula}.
\end{proof}

 \begin{remark}
    For the quantum capacity, we do not have an exact expression for the capacity but upper and lower bounds that differ by roughly $\frac{1}{2} \log(1-\delta)$. The upper and lower bounds we establish in the proof is slightly stronger and it is simpler to express in terms of the optimal error $\delta$ for a fixed code size $D$:
    \begin{equation}
        \label{eq:quantum-cap-formula}
        \begin{aligned}
            &\max_{(\gamma_k)_k} \frac{\sum_{k=1}^{K} \gamma_k^2}{D^2} \\
            &\leq\sup \{1- \delta : \exists (D, \delta) \text{ quantum code for } \cT \} \\
            &\leq \max_{(\gamma_k)_k}\frac{\sum_{k=1}^K d_k \gamma_{k}}{D^2}
        \end{aligned}
    \end{equation}
    where the supremum is taken over integers $\gamma_k$ satisfying $\gamma_k \leq d_k$ and $\sum_{k=1}^K \gamma_k \leq D$. Note that it is simple to see that the optimal choice for $\gamma_k$ is simply to take $\gamma_1 = d_1, \dots, \gamma_s = d_s$ and $\gamma_{s+1} = D - \sum_{k=1}^s \gamma_k$ where $s = \argmax\{ s : \sum_{k=1}^s d_k < D \}$. When $D$ is the sum of the largest $r$ elements of $(d_k)_k$, then the upper and lower bounds match and we obtain an exact characterization. Observe that~\eqref{eq:quantum-cap-bound} follows from~\eqref{eq:quantum-cap-formula} by observing that $\max_{(\gamma_k)_k} \sum_{k=1}^{K} \gamma_k^2 \geq \min\left(D^2, \max_k d_k^2\right)$ and 
    $\sum_{k=1}^K d_k \gamma_{k} \leq (\max_{k} d_k) D$. We leave the problem of computing the exact optimal fidelity as an open problem.
    \end{remark}

        \begin{remark}[Classical special case]
    Note that a classical stochastic $d \times d$ matrix $M$ can be seen as a special case of a quantum channel $\cT( \ketbra{i}{i'}) = 0$ if $i \neq i'$ and $\cT( \ketbra{i}{i}) = \sum_{j} M_{j,i} \proj{j}$. It is straightforward to verify that the eigenvalues of $M$ are the eigenvalues of $\cT$ and that the quantity $\sum_{k=1}^K d_k$ for such a channel is the number of eigenvalues of $M$ of modulus $1$ counted with multiplicity or the dimension of the peripheral subspace. This quantity even has a combinatorial interpretation as the sum of the periods of the bottom strongly connected components of the directed graph associated with the Markov chain $M$, as shown in~\cite{GUAN201855}. Note that this combinatorial interpretation holds more generally for nonnegative matrices and such a decomposition into bottom strongly connected components is sometimes called the Frobenius normal form of the matrix~\cite[Section 1.7]{bapat1997nonnegative}, see also e.g., in~\cite[Theorem 8.5.3 and Remark 8.5.4]{horn2013matrix} for the period of each component. Using the combinatorial interpretation, one can find an algorithm running in linear time in the size of the graph for computing the capacity by using e.g., Tarjan's algorithm~\cite{tarjan1972depth} to find the bottom strongly connected components and then find the periods of each component using~\cite{denardo1977periods}.
 
    \end{remark}

  \subsection{Quantum Markov semi-group's infinite-time capacities}
  \begin{proposition} \label{prop: Markov semigroup capacities}
        Let $(\cT_t)_{t \geq 0}$ be a quantum Markov semigroup. Then we have
        \begin{align*}
        \lim_{t \to \infty} C_{\delta}(\cT_{t}) = C^{\infty}_{\delta}(\cT_1) \\
        \lim_{t \to \infty} Q_{\delta}(\cT_{t}) = Q^{\infty}_{\delta}(\cT_1).
        \end{align*}
        In addition, the peripheral subspace $\chi_{\cT_1}$ can be expressed in terms of the spectrum of generator $\mc{L}$ as follows:
             \[ \chi_{\cT_1} = \text{span}\{ X \in \mc{M}_d | \exists \theta\in \mb R ,\;\; \mc L(X) = i\theta X \}.\] 
    \end{proposition}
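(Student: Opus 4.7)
The proposition splits into two logically independent assertions: convergence of the continuous-time one-shot capacities to the infinite-time counterparts $C_\delta^\infty(\cT_1)$ and $Q_\delta^\infty(\cT_1)$, and the spectral characterization of $\chi_{\cT_1}$ in terms of $\mc L$. My plan is to prove the first by a monotonicity-plus-sandwich argument based on the semigroup law, and the second by combining the spectral mapping theorem for the exponential with the contractivity of the CPTP family $(\cT_t)_{t \geq 0}$ in operator norm.

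For the capacity limits, the key observation is that $t \mapsto C_\delta(\cT_t)$ is non-increasing. Indeed, for $t' \leq t$ the semigroup identity gives $\cT_t = \cT_{t-t'} \circ \cT_{t'}$, so if $(\cE, \cR)$ is a $(D,\delta)$ classical code for $\cT_t$ then $(\cE, \cR \circ \cT_{t-t'})$ is a $(D,\delta)$ classical code for $\cT_{t'}$, yielding $C_\delta(\cT_{t'}) \geq C_\delta(\cT_t)$; the analogous argument using entanglement fidelity handles $Q_\delta$. Writing $t = n + s$ with $n = \lfloor t \rfloor$ and $s \in [0,1)$, this monotonicity sandwiches
\begin{equation*}
C_\delta(\cT_1^{n+1}) \;\leq\; C_\delta(\cT_t) \;\leq\; C_\delta(\cT_1^{n}),
\end{equation*}
and since the integer subsequence $n \mapsto C_\delta(\cT_1^n)$ is non-increasing and non-negative, hence convergent, with limit $C_\delta^\infty(\cT_1)$ by definition, the continuous limit exists and equals $C_\delta^\infty(\cT_1)$. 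The same argument yields the statement for $Q_\delta$.

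For the characterization of $\chi_{\cT_1}$, the inclusion $\supseteq$ is immediate: if $\mc L(X) = i\theta X$ then $\cT_1(X) = e^{\mc L}(X) = e^{i\theta} X$ with $|e^{i\theta}| = 1$. For the reverse inclusion, I would invoke the spectral mapping theorem combined with the Jordan decomposition of $\mc L$: the generalized eigenspace of $\cT_1 = e^{\mc L}$ at a peripheral eigenvalue $\lambda = e^{i\theta}$ (i.e.\ $|\lambda|=1$) coincides with the direct sum of generalized eigenspaces of $\mc L$ over all $\mu \in \spec(\mc L)$ with $e^\mu = \lambda$, and the condition $|\lambda|=1$ forces $\operatorname{Re}(\mu) = 0$, i.e.\ $\mu = i\theta'$ with $\theta' \equiv \theta \pmod{2\pi}$. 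Taking the span over all peripheral $\lambda$ then gives $\chi_{\cT_1} \subseteq \operatorname{span}\{X : \mc L(X) = i\theta X \text{ for some } \theta \in \mb R\}$, provided the generalized eigenspaces collapse to ordinary eigenspaces.

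The main obstacle is precisely this last point: ruling out nontrivial Jordan blocks at the purely imaginary eigenvalues of $\mc L$. The resolution uses the contractivity of $(\cT_t)_{t \geq 0}$: a Jordan block of size $k \geq 2$ at $i\theta \in \spec(\mc L)$ would make the restriction of $\cT_t = e^{t\mc L}$ to that block contain entries of the form $\tfrac{t^{k-1}}{(k-1)!}\,e^{it\theta}$, which diverges as $t \to \infty$ and contradicts the uniform bound $\|\cT_t\| \leq 1$ coming from the CPTP property. Hence the peripheral part of $\mc L$ is diagonalizable, the generalized eigenspaces above reduce to honest eigenspaces, and the two sides of the identity match exactly.
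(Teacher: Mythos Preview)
Your proof is correct and follows essentially the same approach as the paper: the monotonicity-plus-sandwich argument for the capacity limits is identical, and the spectral characterization via the Jordan decomposition of $\mc L$ and the exponential map matches as well. The only minor difference is that where the paper invokes \cite[Proposition~6.2]{Wolfe2012} (peripheral eigenvalues of a quantum channel have trivial Jordan blocks) to force $d_\ell = 1$, you supply the self-contained growth argument from the uniform boundedness of the CPTP semigroup $(\cT_t)_{t \geq 0}$---which is precisely how that proposition is proved, so the two routes coincide in substance.
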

\begin{remark}
        In Proposition~\ref{prop: Markov semigroup capacities}, the choice of \(t = 1\) in \(\cT_1\) is arbitrary and made for simplicity. The result holds for any fixed \(t_0 > 0\), with \(\cT_1\) replaced by \(\cT_{t_0}\).
    \end{remark} 

    We note that the behavior capacity of a special family of QMS (transferred QMS) over time has been thoroughly analyzed in~\cite{bardet2021group}, providing an asymptotic capacity evolution over time. 
\begin{proof}
            Let \(\mathcal{E}\) be a \((D,\delta)\) classical (or quantum) code for \(\mathcal{T}_t\), and \(\mathcal{R}\) be its corresponding recovery channel. If \(t' \leq t\), then \(\mathcal{E}\) is a \((D,\delta)\) classical (or quantum) code for \(\mathcal{T}_{t'}\) with recovery operator \(\mathcal{R} \circ \mathcal{T}_{t-t'}\). Therefore, we have the following inequalities:
            \[
            C_\delta(\mathcal{T}_{\lfloor t \rfloor}) \geq C_\delta(\mathcal{T}_t) \geq C_\delta(\mathcal{T}_{\lceil t \rceil})
            \]
            and
            \[
            Q_\delta(\mathcal{T}_{\lfloor t \rfloor}) \geq Q_\delta(\mathcal{T}_t) \geq Q_\delta(\mathcal{T}_{\lceil t \rceil}).
            \]
            In the limit as \(t \to \infty\), both \(C_\delta(\mathcal{T}_{\lfloor t \rfloor})\) and \(C_\delta(\mathcal{T}_{\lceil t \rceil})\) converge to \(C_\delta^\infty(\mathcal{T}_1)\). Therefore, we have
            \[
            \lim_{t \to \infty} C_\delta(\mathcal{T}_t) = C_\delta^\infty(\mathcal{T}_1)
            \quad \text{and} \quad \lim_{t \to \infty} Q_\delta(\mathcal{T}_t) = Q_\delta^\infty(\mathcal{T}_1).
            \]
        
            Next, we examine the peripheral subspace of  \(\mathcal{T}_1\). It is straightforward to see that if \(\mathcal{L}(X) = i\theta X\) for some operator \(X\) and real number \(\theta\), then  \(\mathcal{T}_t(X) = e^{t\mathcal{L}}(X) = e^{i\theta t}X\), which implies \(X \in \chi_{\mc T_t}\). Consequently, if \(X\) is spanned by eigenoperators of \(\mathcal{L}\) corresponding to imaginary eigenvalues, then \(X\) belongs to \(\chi_{\mathcal{T}_1}\).

      To show the converse, we use the ``super-operator form" which represents the linear map \(\mathcal{E}\) as a matrix \(\hat{E}\) acting on the vector space of operators. Specifically, for a quantum channel \(\mathcal{E}\) and a density matrix \(\rho\), the action of \(\mathcal{E}\) can be written as \(\text{vec}(\mathcal{E}(\rho)) = \hat{E} \text{vec}(\rho)\), where \(\text{vec}(\cdot)\) denotes the column-stacking vectorization of a matrix.
            Let  \(L\) be the super-operator form of \(\mathcal{L}\), with a Jordan decomposition given by
            \[
            L = V \left( \bigoplus_{\ell=1}^{s} J_{\ell}(\lambda_{\ell}) \right) V^{-1},
            \]
            where \(J_{\ell}(\lambda_{\ell})\) is a Jordan block corresponding to an eigenvalue \(\lambda_{\ell}\) of \(L\), i.e., $J_{\ell}(\lambda_{\ell}) = \lambda_{\ell} I + J_{\ell}(0)$ with $J_{\ell}(0) = \begin{pmatrix} 0 & 1 & 0 & \dots & 0 \\ 0 & 0 & 1 & \dots & 0 \\
            \vdots & \vdots & \ddots & \vdots & \vdots \\ 0 & \dots & \dots & \dots & 0\end{pmatrix} \in \cM_{d_{\ell}}$, where $d_{\ell}$ is the size of the block.
            Thus the super-operator of \(\mathcal{T}_1 = e^{\mathcal{L}}\) has the form
            \begin{equation} \label{eq: jordan exp lt}
                 e^{L} = V \left( \bigoplus_{\ell=1}^s e^{J_\ell(\lambda_\ell)} \right) V^{-1},
            \end{equation}
           with 
\[
e^{J_\ell(\lambda_\ell) }= e^{\lambda_\ell } \cdot \begin{pmatrix}
1 & 1 & \frac{1}{2!} & \cdots & \frac{1}{(d_{\ell}-1)!} \\
0 & 1 & 1 & \cdots & \frac{1}{(d_{\ell}-2)!} \\
0 & 0 & 1 & \cdots & \frac{1}{(d_{\ell}-3)!} \\
\vdots & \vdots & \vdots & \ddots & \vdots \\
0 & 0 & 0 & \cdots & 1
\end{pmatrix}.
\]
Thus the eigenvalues of $\cT_1$ are $\{e^{\lambda_{\ell}}\}_{\ell}$. Now let $X$ be an eigenvector of $\cT_1$ with eigenvalue $e^{\lambda_{\ell}}$ with $|e^{\lambda_{\ell}}| = 1$. We know from~\cite[Proposition 6.2]{Wolfe2012} that for such eigenvalues $d_{\ell} = 1$. Therefore, $X$ is also an eigenvector of $L$ with eigenvalue $\lambda_{\ell}$ and $\lambda_{\ell}$ is pure imaginary.

\end{proof}

A simple corollary is that when taking tensor powers of a channel, as is common in Shannon theory, the classical and quantum capacities are given by the zero-error capacities at infinite time.
  
It has been demonstrated that the zero-error capacities of quantum channel $\mathcal{T}: \mathcal{M}_d \to \mathcal{M}_d$ attain their infinite-time capacities after $d^2$ time concatenations\cite{singh2024-v1}. Consequently, for any QMS or infinitely divisible quantum channel on finite Hilber space, we have: 
\begin{proposition}\label{prop: zero-error finite time lindblad}
    Let \((\mc{T}_t)_{t > 0}\) be a quantum Markov semigroup. Then, for any \(\tilde{t} > 0\), we have:
    \[
    C_0(\mc{T}_{\tilde{t}}) = C_0^\infty(\mc{T}_1),
    \]
    \[
    Q_0(\mc{T}_{\tilde{t}}) = Q_0^\infty(\mc{T}_1).
    \]
\end{proposition}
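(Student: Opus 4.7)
The plan is to combine the cited result of~\cite{singh2024-v1}, which says that $C_0(\mc T^{d^2}) = C_0^\infty(\mc T)$ and $Q_0(\mc T^{d^2}) = Q_0^\infty(\mc T)$ for any channel $\mc T$ on $\mc M_d$, with the semigroup property of $(\mc T_t)_{t \geq 0}$. Given any $\tilde t > 0$, the semigroup law gives the factorization $\mc T_{\tilde t} = (\mc T_{\tilde t/d^2})^{d^2}$, and $\mc T_{\tilde t/d^2}$ is itself a legitimate quantum channel on $\mc M_d$. Applying the cited result to this channel immediately yields
\[
C_0(\mc T_{\tilde t}) \;=\; C_0^\infty(\mc T_{\tilde t/d^2}), \qquad Q_0(\mc T_{\tilde t}) \;=\; Q_0^\infty(\mc T_{\tilde t/d^2}).
\]

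It then remains to show that $C_0^\infty(\mc T_s)$ and $Q_0^\infty(\mc T_s)$ do not depend on the choice of $s > 0$, so that the right-hand sides can be rewritten with $\mc T_1$. By Theorem~\ref{thm:classical and quantum capacity of repeated channel} specialized to $\delta = 0$, these two quantities are determined entirely by the tuple of integers $(d_k)_{k=1}^K$ from the peripheral decomposition of the channel (namely $C_0^\infty = \log \sum_k d_k$ and $Q_0^\infty = \log \max_k d_k$). So it suffices to check that the peripheral subspace $\chi_{\mc T_s}$, and hence its Wedderburn-type decomposition in~\eqref{eq:peripheral-decomposition-1}, coincides for every $s > 0$.

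For the invariance of $\chi_{\mc T_s}$, I would simply rerun the spectral argument in the proof of Proposition~\ref{prop: Markov semigroup capacities} with the parameter $1$ replaced by an arbitrary $s > 0$. In one direction, any $X$ with $\mc L(X) = i\theta X$ satisfies $\mc T_s(X) = e^{i\theta s} X$, an eigenvalue of modulus $1$. In the other direction, a Jordan decomposition $L = V(\bigoplus_\ell J_\ell(\lambda_\ell)) V^{-1}$ of the superoperator of $\mc L$ gives $e^{sL} = V(\bigoplus_\ell e^{s J_\ell(\lambda_\ell)}) V^{-1}$; by~\cite[Proposition 6.2]{Wolfe2012} any modulus-one eigenvalue of $\mc T_s$ corresponds to a $1\times 1$ Jordan block of $L$ with purely imaginary eigenvalue. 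Therefore
\[
\chi_{\mc T_s} \;=\; \mathrm{span}\{X \in \mc M_d : \exists\, \theta \in \mathbb R,\; \mc L(X) = i\theta X\},
\]
which manifestly does not depend on $s$, and so the integers $(d_k)_k$ are the same for every $\mc T_s$.

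Chaining the two ingredients yields $C_0(\mc T_{\tilde t}) = C_0^\infty(\mc T_{\tilde t/d^2}) = C_0^\infty(\mc T_1)$, and the identical chain works for $Q_0$, which is what we wanted. There is no real obstacle here: the cited Singh result does the heavy lifting, and the only genuinely new observation is the time-rescaling invariance of the peripheral structure of a QMS, which is essentially the content already established in Proposition~\ref{prop: Markov semigroup capacities}. The proof is therefore short, provided one is willing to invoke the $d^2$-step bound from~\cite{singh2024-v1} as a black box.
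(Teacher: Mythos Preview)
Your proposal is correct and follows essentially the same route as the paper: factor $\mc T_{\tilde t} = (\mc T_{\tilde t/d^2})^{d^2}$, invoke the $d^2$-step result from~\cite{singh2024-v1} to get $C_0(\mc T_{\tilde t}) = C_0^\infty(\mc T_{\tilde t/d^2})$ (and likewise for $Q_0$), and then use the time-independence of the peripheral structure of a QMS to replace $\tilde t/d^2$ by $1$. The only cosmetic difference is that the paper cites Proposition~\ref{prop: Markov semigroup capacities} (together with its accompanying remark that the choice $t_0=1$ is arbitrary) directly for the last step, whereas you spell out again why $\chi_{\mc T_s}$ is independent of $s$ and then appeal to Theorem~\ref{thm:classical and quantum capacity of repeated channel} at $\delta=0$.
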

\begin{proof}
    It is shown in~\cite{singh2024-v1}[Proposition 5.2] that any quantum channel \(\mc{T}: \mc{M}_d \to \mc{M}_d\) reaches its infinite-time classical and quantum zero-error capacities after \(t \geq d^2\). Specifically:
    \[
    C_0(\mc{T}^{d^2}) = C_0^\infty(\mc{T}) \quad \text{and} \quad Q_0(\mc{T}^{d^2}) = Q_0^\infty(\mc{T}).
    \]

    Now, for any \(\tilde{t} > 0\), we can express \(\mc{T}_{\tilde{t}}\) as:
    \[
    \mc{T}_{\tilde{t}} = \left( \mc{T}_{\tilde{t}/d^2} \right)^{d^2}.
    \]
    Using this decomposition, the zero-error classical and quantum capacities of \(\mc{T}_{\tilde{t}}\) are equivalent to the infinite-time zero-error capacities of \(\mc{T}_{\tilde{t}/d^2}\). By applying Proposition~\ref{prop: Markov semigroup capacities}, we have:
    \[
    C_0(\mc{T}_{\tilde{t}}) = C_0^\infty(\mc{T}_{\tilde{t}/d^2}) = C_0^\infty(\mc{T}_1),
    \]
    and
    \[
    Q_0(\mc{T}_{\tilde{t}}) = Q_0^\infty(\mc{T}_{\tilde{t}/d^2}) = Q_0^\infty(\mc{T}_1).
    \]

    This completes the proof.
\end{proof}

\section{Additivity of infinite-time capacities}

    \begin{lemma}\label{lem:peripheral projection of tensor product}
        Consider two quantum channels, $\mc T: \mc M_{d}\to\mc M_{d} $ and $\mc S:\mc M_{d'}\to \mc M_{d'}$, with their respective peripheral projections denoted by $\mc T_P$ and $\mc S_P$, and their respective peripheral subspaces denoted by $\chi_\mc T$ and $\chi_\mc S$. For the tensor product of these channels, $\mc T \otimes \mc S$, the peripheral projection is given by $\mc T_P \otimes \mc S_P$, and the corresponding peripheral subspace is $\chi_\mc T \otimes \chi_\mc S$.
    \end{lemma}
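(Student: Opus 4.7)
The plan is to establish both claims---the peripheral subspace identity $\chi_{\mc T \otimes \mc S} = \chi_\mc T \otimes \chi_\mc S$ and the peripheral projection identity $(\mc T \otimes \mc S)_P = \mc T_P \otimes \mc S_P$---by combining the multiplicative behavior of eigenvectors under tensor products with the structural fact (used already in the proof of Proposition~\ref{prop: Markov semigroup capacities}) that every peripheral eigenvalue of a quantum channel has only trivial Jordan blocks. One inclusion is immediate: if $X \in \chi_\mc T$ and $Y \in \chi_\mc S$ satisfy $\mc T(X) = \lambda X$ and $\mc S(Y) = \mu Y$ with $|\lambda| = |\mu| = 1$, then $(\mc T \otimes \mc S)(X \otimes Y) = \lambda \mu (X \otimes Y)$ with $|\lambda \mu| = 1$, so $X \otimes Y \in \chi_{\mc T \otimes \mc S}$. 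Extending by linearity gives $\chi_\mc T \otimes \chi_\mc S \subseteq \chi_{\mc T \otimes \mc S}$.

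For the reverse inclusion I would construct a simultaneous limit of powers. Listing the peripheral eigenvalues of $\mc T$ and $\mc S$ jointly as $(\lambda_1, \dots, \lambda_m) \in \mathbb{T}^m$, consider the sub-semigroup $\{(\lambda_1^t, \dots, \lambda_m^t) : t \in \mathbb{N}\}$ inside the compact abelian group $\mathbb{T}^m$. A standard fact asserts that any closed sub-semigroup of a compact group is a subgroup (take $h^{n_k} \to g$ by compactness and extract $h^{n_{k+1}-n_k} \to e$), so the closure contains $(1,\dots,1)$ and therefore yields a sequence $t_i \to \infty$ with $\mc T^{t_i} \to \mc T_P$ and $\mc S^{t_i} \to \mc S_P$ jointly. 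Consequently $(\mc T \otimes \mc S)^{t_i} = \mc T^{t_i} \otimes \mc S^{t_i} \to \mc T_P \otimes \mc S_P$. Now pick any $Z \in \chi_{\mc T \otimes \mc S}$ with $(\mc T \otimes \mc S)(Z) = \nu Z$, $|\nu| = 1$; passing to a subsequence with $\nu^{t_i} \to c$ on the unit circle, we obtain $(\mc T_P \otimes \mc S_P)(Z) = c Z$. Using the image formula $(\mc T_P \otimes \mc S_P)(\mc M_d \otimes \mc M_{d'}) = \mc T_P(\mc M_d) \otimes \mc S_P(\mc M_{d'}) = \chi_\mc T \otimes \chi_\mc S$, we get $Z = c^{-1}(\mc T_P \otimes \mc S_P)(Z) \in \chi_\mc T \otimes \chi_\mc S$, proving the reverse inclusion.

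Finally, to identify $(\mc T \otimes \mc S)_P$ with $\mc T_P \otimes \mc S_P$, I would verify the three properties characterizing the peripheral projection channel recalled in Proposition~\ref{prop:peripheral-space}: (i) the image equals $\chi_{\mc T \otimes \mc S}$, which is exactly what was just computed; (ii) it arises as a limit of powers of $\mc T \otimes \mc S$, which was the output of the simultaneous-limit construction; and (iii) it acts as the identity on its image, because $\mc T_P$ and $\mc S_P$ fix $\chi_\mc T$ and $\chi_\mc S$ respectively. The main technical hurdle is producing the joint sequence $t_i$ along which powers of both channels converge to their peripheral projections; once this is in place, the remaining arguments reduce to linearity and to the basic image formula for tensor products of linear maps.
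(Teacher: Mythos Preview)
Your proof is correct. Both your argument and the paper's hinge on producing a single sequence $t_i\to\infty$ along which powers of $\mc T$, $\mc S$, and $\mc T\otimes\mc S$ all converge to their peripheral projections, but the constructions differ. The paper picks separate sequences $m_i,n_i,k_i$ for the three channels and passes to the product sequence $m_in_ik_i$, invoking the idempotence of peripheral projections; you instead list all peripheral eigenvalues of $\mc T$ and $\mc S$ together and use the fact that a closed sub-semigroup of the torus $\mathbb{T}^m$ is a group to find $t_i$ with every $\lambda_j^{t_i}\to 1$ simultaneously. Your route is arguably more transparent here, since the paper's step ``$\mc T^{m_i}\to\mc T_P$ hence $\mc T^{m_in_ik_i}\to\mc T_P$'' is delicate when $n_ik_i\to\infty$ and peripheral eigenvalues are not roots of unity. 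The logical order is also reversed: the paper proves the projection identity first and reads off the subspace identity as the fixed-point set of $(\mc T\otimes\mc S)_P$, whereas you establish $\chi_{\mc T\otimes\mc S}=\chi_{\mc T}\otimes\chi_{\mc S}$ first via the image formula for $\mc T_P\otimes\mc S_P$ and then identify the projection. Two minor points: the three properties of $\mc T_P$ you cite appear in the proof of Theorem~\ref{thm:classical and quantum capacity of repeated channel} (quoted from~\cite{Wolfe2012}), not in Proposition~\ref{prop:peripheral-space}; and you should make explicit that those properties \emph{characterize} the peripheral projection---i.e., that any limit of powers of $\mc T\otimes\mc S$ that fixes $\chi_{\mc T\otimes\mc S}$ pointwise must equal $(\mc T\otimes\mc S)_P$---which follows immediately from the trivial-Jordan-block fact you already invoked.
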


\begin{proof}
        By definition  there exist increasing sequences $\{m_i\}$, $\{n_i\}$, and $\{k_i\}$ such that
        $\lim_{i\to \infty} \mc T^{m_i} = \mc T_P$, $ 
        \lim_{i\to \infty} \mc S^{n_i} = \mc S_P$, and \quad
        $\lim_{i\to \infty} (\mc T\otimes \mc S)^{k_i} = \left(\mc T\otimes \mc S \right)_P.$
        Since any power of a peripheral projection is itself, we have
        \begin{align*}
            \lim_{i\to \infty} \mc T^{m_in_ik_i} &= \mc T_P,\\
        \lim_{i\to \infty} \mc S^{m_in_ik_i} &= \mc S_P, \\
        \text{and} \quad 
        \lim_{i\to \infty} (\mc T\otimes \mc S)^{m_in_ik_i} &= \left(\mc T\otimes \mc S \right)_P.
        \end{align*}
        
        Therefore, $\left( \mc T\otimes \mc S\right)_P = \mc T_P \otimes \mc S_P$.

        Next, we consider the peripheral subspaces. The peripheral subspace is the fixed-point subspace of the peripheral projection. If $X\in \chi_\mc T\otimes \chi_\mc S$, then it can be written as $X= \sum_i Y_i\otimes Z_i$ where $Y_i\in \chi_\mc T$ and $Z_i\in \chi_\mc S$. So we have
        \[\mc T_P\otimes \mc S_P (X)= \sum_i \mc T_P(Y_i)\otimes \mc S_P(Z_i)= \sum_i Y_i\otimes Z_i=X,\]
        showing that $X$ is a fixed point of  $\mc T_P\otimes \mc S_P$ . Therefore $\chi_\mc T \otimes \chi_\mc S \subseteq \chi_{\mc T\otimes \mc S}$.

        For the other direction, let $X\in \mc M_d\otimes \mc M_{d'}$ belongs to peripheral subspace of $\mc T\otimes \mc S$, and decompose $X$ as $X= \sum_i Y_i\otimes Z_i$  where $Y_i\in\mc M_d $ and $Z_i\in \mc M_{d'}$. Since $X$ is a fixed point of $\mc T_P \otimes \mc S_P$, we have 
        \[ X= \mc T_P \otimes \mc S_P(X)= \sum_i \mc T_P(Y_i) \otimes \mc S_P(Z_i).\]
        Because $\mc T_P(Y_i)\in \chi_\mc T$ and $\mc S_P(Z_i)\in \chi_\mc S$ ,  it follows that  $X\in \chi_\mc T \otimes \chi_\mc S  $. Thus  $\chi_{\mc T\otimes \mc S }\subseteq \chi_\mc T \otimes \chi_\mc S $. 
        Therefore, we conclude that $\chi_{\mc T\otimes \mc S }= \chi_\mc T \otimes \chi_\mc S$.
    \end{proof}
\subsection{Proof of Theorem~\ref{thm:additivity}}
\begin{proof}
        By Lemma~\ref{lem:peripheral projection of tensor product}, the  peripheral subspace of $\mc T\otimes \mc S $ is the tensor product of the peripheral subspace of $\mc T$ and $\mc S$. So if $\chi_\mc T= 0 \oplus \bigoplus_{k=1}^K \mc{M}_{d_k} \otimes \omega_k$ for the decomposition $\mb{C}^d = H_0 \oplus \bigoplus_{k=1}^K H_{k,1} \otimes H_{k,2}$ and $\chi_\mc S= 0 \oplus \bigoplus_{k'=1}^{K'} \mc{M}_{d'_{k'}} \otimes \omega'_{k'}$ for the decomposition $\mb{C}^{d'} = H'_0 \oplus \bigoplus_{k'=1}^{K'} H'_{k',1} \otimes H'_{k',2}$, then we can decompose
        \begin{align*}
            &\mb{C}^{d} \otimes \mb{C}^{d'} = \overline{H}_0 \oplus \\
            &\bigoplus_{k \in \{1, \dots, K\}, k' \in \{1, \dots, K'\}} H_{k,1} \otimes H'_{k',1} \otimes H_{k,2} \otimes H'_{k',2},
        \end{align*}

        where 
        \begin{align*}
        \overline{H}_0 = (H_0 &\otimes H'_0)\\
        &\oplus \left(H_0 \otimes (\bigoplus_{k'=1}^{K'} H'_{k',1} \otimes H'_{k',2}) \right) \\
        &\oplus \left((\bigoplus_{k=1}^K H_{k,1} \otimes H_{k,2}) \otimes H'_0\right)
        \end{align*}
        
        and get for this decomposition
        \[ \chi_{\mc T\otimes \mc S}= 0 \oplus \bigoplus_{k,k'} \mc{M}_{d_k\times d'_{k'}} \otimes \omega_k\otimes \omega'_{k'}.\]

        By Theorem~\ref{thm:classical and quantum capacity of repeated channel}, $C_{0}^{\infty}(\cT \otimes \mc{S}) = \log(\sum_{k,k'} d_k d'_{k'}) = C_{0}^{\infty}(\cT) + C_0^{\infty}(\mc{S})$ and $Q_{0}^{\infty}(\cT \otimes \mc{S}) = \log(\max_{k,k'} d_k d'_{k'}) = Q_{0}^{\infty}(\cT) + Q_0^{\infty}(\mc{S})$.
        
    \end{proof}

\subsection{Proof of Proposition~\ref{prop: Shannon theory capacity}}
 \begin{proof}
By Theorem~\ref{thm:classical and quantum capacity of repeated channel} and Theorem~\ref{thm:additivity}, we have
        \begin{align*} 
        C_\delta^\infty (\mc  T^{\otimes m})
        &= \log\left(\left\lfloor \frac{2^{C_0^\infty (\mc  T^{\otimes m})}}{1-\delta} \right\rfloor\right) \\
        &= \log\left(\left\lfloor \frac{2^{mC_0^\infty (\mc  T)}}{1-\delta} \right\rfloor\right).
        \end{align*}
        But 
        \begin{align*}
            m C_0^{\infty}(\cT) + \log \left(\frac{1}{1-\delta}\right) - 1 \leq 
            \log\left(\left\lfloor \frac{2^{m C_0^\infty (\mc  T)}}{1-\delta} \right\rfloor\right) \leq\\
            m C_0^{\infty}(\cT) + \log \left(\frac{1}{1-\delta}\right),
        \end{align*}
    which proves the desired result.

    For the quantum capacity, the same argument gives

        \begin{align*}
        m Q_0^{\infty}(\cT)+ \frac{1}{2} \log\left(\frac{1}{1-\delta}\right)-1 \leq  Q_{\delta}^\infty\left(\mc T^{\otimes m}\right) \\
            \leq m Q_0^{\infty}(\cT)+ \log\left(\frac{1}{1-\delta}\right), 
            \end{align*}
        which proves the desired result.  
        
    \end{proof}

    \section{Algorithm }

\begin{proof}[Proof of Theorem \ref{thm:algorithm-structure}]

    The first step is to compute the peripheral projection channel $\cT_{P}$. Let $\hat T \in \cM_{d^2}$ be the super-operator form of $\mc T$. Using the Jordan normal form, we can express $\hat{T}$ as
    
    \( \hat T = \sum_{\ell=1}^s \lambda_{\ell} P_\ell +N_\ell\), where $\lambda_{\ell}$ are the eigenvalues of $\hat{T}$, $P_{\ell}$ are projections and $N_{\ell}$ are nilpotent. The super-operator form of $\mc T_P$ is given by~\cite[Proposition 6.3]{Wolfe2012}
    \[ \hat T_P= \sum_{\ell: |\lambda_\ell|=1} P_\ell.\]
    Thus, $\hat{T}_P$ can be computed using the Jordan normal form of $\hat{T}$ which can be obtained in time $O((d^2)^3) = O(d^6)$.

    Recall that $\chi_{\cT} = \mathrm{Fix}(\cT_P)$, where $\mathrm{Fix}(\cS)$ is the fixed point space of a map $\cS$, i.e., $\mathrm{Fix}(\cS) = \mathrm{span}\{X \in \cM_{d} : \cS(X) = X \}$. The rest of the algorithm computes the structure of the fixed point space of the quantum channel $\cT_{P}$.

    Let $\chi_\mathcal{T} = \mathrm{Fix}(\cT_P) = 0 \oplus \bigoplus_{k=1}^K \mathcal{M}_{d_k} \otimes \omega_k$ for the Hilbert space decomposition 
    \begin{equation}\label{eq: hilbert decompose destortion}
        \mb C^d = H_0 \oplus \bigoplus_{k=1}^K H_{k,1}\otimes H_{k,2}.
    \end{equation}
    Instead of working directly with $\chi_{\mathcal{T}}$, it is convenient to work with the matrix algebra $\cA \subseteq B(\mb{C}^d)$ which is defined as
        \begin{equation} \label{eq: structure of von Neumann alg}
        \mathcal{A} = \bigoplus_{k=1}^K \mathcal{M}_{d_k} \otimes I_{d_k'}.
    \end{equation}
    for the Hilbert space decompostion~\eqref{eq: hilbert decompose destortion} with $d_k = \dim H_{k,1}$ and $d'_k = \dim H_{k,2}$. Then, $\chi_{\mathcal{T}}$ is called a distortion of the matrix algebra $\mathcal{A} = \bigoplus_{k=1}^K \mathcal{M}_{d_k} \otimes I_{d_k'}$, i.e., there is a completely positive map $\cD$ such that $\cD(\cA) = \chi_{\mathcal{T}}$.

    In~\cite[Theorem 5 and Section V]{blume2010information}, an algorithm is given to find operators $A_1, \dots, A_N \in \cM_{d}$ such that $\cA = \mathrm{span}(A_1, \dots, A_N)$. This is done by computing left and right eigenvectors of $\hat{T}$ and thus can be done in time $O(d^6)$.

    What remains is to determine the structure of $\mathcal{A}$. 
To do this, it is sufficient to find a complete set of basis vectors $|e_{k,i,j}\rangle$, where $k = 1, \dots, K$, $i = 1, \dots, d_k$, and $j = 1, \dots, d_{k'}$, such that any element of $\mathcal{A}$ has a block diagonal matrix form $0 \oplus \bigoplus_{k=1}^K A_k \otimes I_{d_{k}'}$. In other words, we need to satisfy the following condition:
\[
\langle e_{k',i',j'} | A | e_{k,i,j} \rangle = \delta_{k,k'} \delta_{j,j'} \langle i' | A_k | i \rangle.
\]
Note that the choice of this basis is not unique because we can choose any arbitrary basis for $\mathcal{M}_{d_k}$ and $\mathcal{M}_{d_{k'}}$, and form a basis for the algebra by taking their tensor product.
To construct such a basis, the process involves three steps:
\begin{enumerate}
    \item \textbf{Find Minimal Central Projections for $\mc A$:} First, identify orthogonal projectors $\mathcal{P}_k$ on the spaces $H_{k,1} \otimes H_{k,2}$.
    \item \textbf{Find Minimal Projection for $\mc A$:} Next, decompose each projector $\mathcal{P}_k$ as a sum of projections $P_{k,i}$, i.e., $\mathcal{P}_k = \sum_{i=1}^{d_k} P_{k,i}$, where $P_{k,i}$ is a projection onto a subspace of the form $|\psi_i\rangle \otimes H_{k,2}$, with $|\psi_i\rangle \in H_{k,1}$.
    \item \textbf{Construct the basis:} Finally, use these projections to construct the required set of basis vectors.
\end{enumerate}
This approach will yield the desired structure of the algebra $\mathcal{A}$.
   
    We start with the first step, i.e., computing $\mc P_k$. For that we use that the projectors $\mc P_k$ are the minimal projections in the center of $\cA$. Recall that the center $Z(\cA) = \{ X \in \cA : X Y = YX \; \forall Y \in \cA\}$ and that $Z(\cA) = \cA \cap \cA'$ where $\cA'$ is the commutant of $\cA$ defined by $\cA' = \{ X \in B(\mb{C}^d) : XY = YX \quad \forall Y \in \cA\}$. 

    Note that for the matrix algebra $\cA$ in~\eqref{eq: structure of von Neumann alg}, we have $\cA' = \bigoplus_{k=1}^K I_{H_{k,1}} \otimes \cM_{d_k'}$ and $Z(\cA) = \bigoplus_{k=1}^K \mb{C} I_{H_{k,1}} \otimes I_{H_{k,2}}$. Given an algebra $\cB$, a minimal projection in $\cB$ is an orthogonal projection $P$ such that $P \cB P = \mb{C} P$~\cite{holbrook2003noiseless}. The minimal projections of $Z(\cA)$, that are also called the minimal central projections, are exactly the projectors $\mc P_k$ we are looking for.
   
    Thus, in order to compute the projectors $\{\mc P_k\}_{k=1}^K$, we first compute a representation of $Z(\cA)$ as a linear span. We do this by computing a representation of the commutant $\cA'$ as the linear span of some operators $B_1, \dots, B_{N'}$, which can be done by computing the kernel of the matrix $\Gamma$ described in Lemma~\ref{lem: generators of commutant}. Then, the center can be computed by taking the intersection of subspaces $\cA$ and $\cA'$; see Algorithm~\ref{alg: central algebra}. Then Algorithm~\ref{alg: decomposing to minimal} described a general algorithm to find all minimal projections of an algebra in  $\mc M_d$ in time $O(N d^3 \log d)$, where $N$ is the number of operators describing the algebra as a linear span. Note that the center $Z(\cA)$ has dimension at most $d$ and so we may assume $N \leq d$. As such, we have computed all the $\{\mc P_k\}_{k=1}^K$ in time $O(d^5 \log d)$.

    Now, the second step is within each block $k$, we want to compute minimal projections $P_{k,i}$ such that $\sum_{i=1}^{d_k} P_{k,i}= \mc P_k$ where $d_k=\dim H_{k,1}$. For that we now compute minimal projections $P_{k,j}$ in $\cA$ satisfying $P_{k,j} \mc P_k = P_{k,j}$ in algebra $\cA$ by using Algorithm~\ref{alg: decomposing to minimal} with inputs $\mc P_k$ and $\{A_1, \dots, A_N\}$. For each $k$, the runtime for finding the minimal projections $\{P_{k,j}\}_{j=1}^k$ $O(d_k \times d^5 \log d)$ (as the dimension $N$ of the algebra $\cA$ can be up to $d^2$). As $\sum_{k=1}^K d_k \leq d$, the runtime of this step is $O(d^6 \log d)$.

    For the last step we will construct the basis by using $\mc P_k$ and $P_{k,i}$. We have that $d'_k = \rank(P_{k,j})$ and $d_k$ is the number of minimal projections found $\{P_{k,j}\}_j$. From the structure of the algebra $\cA$ (as given in \eqref{eq: structure of von Neumann alg}), we know that each minimal projection $P_{k,i}$ takes the form $|\alpha_{k,i}\rangle\langle \alpha_{k,i}| \otimes I_{H_{k,2}}$, where $\{|\alpha_{k,i}\rangle\}_{i=1}^{d_k}$ forms a complete basis for $H_{k,1}$.
    Thus, the support of $P_{k,i}$ (eigenvectors with unite eigenvalues) is of form of 
    \(\{ |\alpha_{k,i}\ra \otimes |\beta_{k,i,j}\ra\}\)
    , where $\{|\beta_{k,i,j}\ra\}_{j=1}^{d_k'} $ is a complete basis for $H_{k,2}$. One can find these set of vectors by computing the eigenvalue and eigenvectors of $P_{k,i}$. The last challenge is that although  $\{|\beta_{k,i,j}\ra\} $ is a complete basis of $H_{k,2}$ for any $i$, they are not  necessarily same for any two $i$ and $i'$. To overcome this problem, we should find the unitary maps $U_{k,m,n}$ such $|\beta_{k,m,j}\ra = U^{k,m,n} |\beta_{k,n,j}\ra$ for all $j$.
    
    For finding unitary map $U^{k,m,n}$, we define matrix $V^{k,m,n}$ for $A\in \mc A$ as 
    \[ V^{k,m,n}_{i,j} := \la \alpha_{k,n}|\la \beta_{k,n,j} |\mc P_k A \mc P_k | \alpha_{k,m} \ra|\beta_{k,n,i}\ra.\]
    As $A$ has the structure in form of $\bigoplus_{k=1}^K A_k \otimes I_{H_{k,2}}$, we have $$V^{k,m,n}_{i,j}= \la \alpha_{k,n} |A_k |\alpha_{k,m} \ra \times \la \beta_{k,n,j}|\beta_{k,m,i}\ra. $$
    If $\la \alpha_{k,n} |A_k |\alpha_{k,m} \ra$ is non-zero, then 
    \[ U^{k,m,n}= \frac{V^{k,m,n}}{\tr(V^{k,m,n^\dagger}V^{k,m,n})}.\]
    Since $\mc A$ is spanned by $\{A_i\}$, for any $k$, $m$, and $n$, there exists at least one $A \in \{A_i\}$ such that $V^{k,m,n}$ is a non-zero matrix . This ensures that we can construct all of the $U^{k,m,n}$.By using $U^{k,m,n}$, we can construct the basis in the form of 
     \begin{align*}
            |e_{k,i,j}\ra :&= |\alpha_{k,i}\ra \otimes |\beta_{k,1,j}\ra \\
            &= \sum_m \la \beta_{k,i,m}|\beta_{k,1,j}\ra |\alpha_{k,i,m}\ra|\beta_{k,i,m} \ra \\
            &= \sum_m U^{k,1,i}_{j,m} |\alpha_{k,i}\ra |\beta_{k,i,m}\ra.
     \end{align*}
\end{proof}
 
Below we provide a lemma that was used in the above proof. For this lemma we need the concept of operator-vector correspondence: given a matrix $X\in \cM_d$, it represents a vector,  $\left|X\right\rangle\rangle\in \mathbb{C}^n\otimes \mathbb{C}^n$.

    \begin{lemma}\label{lem: generators of commutant}
            Let \(\mathcal{A}\) be a matrix algebra generated by $\{A_1, \cdots, A_N\}$, and \(\mathcal{A}'\) be the commutant of \(\mathcal{A}\). Then \(X\) belongs to \(\mathcal{A}'\) if and only if \(\left|X\right\rangle\rangle\) belongs to the kernel of \(\Gamma\), where 
            \begin{equation}\label{eq: Gamma mc A}
                \Gamma = \sum_{i=1}^N \left( A_i \otimes I - I \otimes A_i^T \right)^\dagger \left( A_i \otimes I - I \otimes A_i^T \right).
            \end{equation}
    \end{lemma}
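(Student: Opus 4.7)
The plan is to reduce commutation with the full algebra to commutation with each listed generator, vectorize each such commutation, and then exploit the sum-of-squares form of $\Gamma$.

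First I would observe that $X \in \mathcal{A}'$ if and only if $A_i X = X A_i$ for every $i = 1,\dots,N$. The forward direction is by definition. For the converse, the set $\{Y \in \mathcal{M}_d : YX = XY\}$ is itself an algebra (closed under sums, scalar multiples, and products), so once it contains the generators $A_1,\dots,A_N$ it contains every polynomial in them, hence all of $\mathcal{A}$.

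Next I would translate the commutation conditions into the vector picture via the standard operator--vector correspondence identity $|A Y B\rangle\rangle = (A \otimes B^T)|Y\rangle\rangle$. Applying it with $Y = X$ and $B = I$ gives $|A_i X\rangle\rangle = (A_i \otimes I)|X\rangle\rangle$, and with $A = I$, $B = A_i$ gives $|X A_i\rangle\rangle = (I \otimes A_i^T)|X\rangle\rangle$. Subtracting,
\[
|[A_i, X]\rangle\rangle = M_i |X\rangle\rangle, \qquad M_i := A_i \otimes I - I \otimes A_i^T.
\]
Since vectorization is a bijection, $A_i X = X A_i$ is equivalent to $M_i |X\rangle\rangle = 0$. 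Combined with the first step, $X \in \mathcal{A}'$ iff $M_i |X\rangle\rangle = 0$ for every $i$.

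Finally I would use that $\Gamma = \sum_{i=1}^N M_i^\dagger M_i$ is a sum of positive semidefinite operators, so for any vector $v$,
\[
\langle v | \Gamma | v \rangle = \sum_{i=1}^N \|M_i v\|^2 \geq 0.
\]
Thus $\Gamma v = 0$ iff $\langle v|\Gamma|v\rangle = 0$ iff $M_i v = 0$ for all $i$. Specializing to $v = |X\rangle\rangle$ chains the equivalences: $|X\rangle\rangle \in \ker \Gamma$ iff $M_i |X\rangle\rangle = 0$ for every $i$ iff $[A_i, X] = 0$ for every $i$ iff $X \in \mathcal{A}'$, which is exactly the claim. There is no real obstacle here; the only points requiring care are the signs/transposes in the vectorization identity and the standard fact that a finite sum of PSD operators annihilates $v$ only when each summand does.
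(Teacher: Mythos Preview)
Your proof is correct and follows essentially the same route as the paper: vectorize each commutator via the identity $(A\otimes B)|C\rangle\rangle=|ACB^{T}\rangle\rangle$ to get $M_i|X\rangle\rangle=|[A_i,X]\rangle\rangle$, then use positivity of each $M_i^\dagger M_i$ to identify $\ker\Gamma$ with the intersection of the $\ker M_i$. Your explicit justification that commuting with the generators $A_i$ suffices for membership in $\mathcal{A}'$ is a nice addition that the paper leaves implicit.
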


 \begin{algorithm}[t]
    \scriptsize
    \caption{Find the structure of $\chi_\mathcal{T}$}
    \label{alg: Find the structure of chi}
    \begin{algorithmic}[1]
        \State \textbf{Input:} $\hat{T}$, the super-operator form of $\mathcal{T}$
        \State \textbf{Output:} The structure of $\chi_\mathcal{T}$ as in~\eqref{eq:peripheral-decomposition}
        \Procedure{Peripheral-Subspace-Structure}{$\cT$}
            \State $\hat{T} = \sum_{\ell=1}^s \lambda_{\ell} P_{\ell} + N_{\ell}$ \Comment{Jordan decomposition}
            \State $\hat{T}_P \gets \sum_{\ell : |\lambda_{\ell}| = 1} P_{\ell}$
           
            \State \textbf{return} \textsc{Fixed-Subspace-Structure}($\hat{T}_P$) \Comment{Algorithm~\ref{alg: Find the structure of fixed points}}
        \EndProcedure
    \end{algorithmic}
\end{algorithm}

    \begin{algorithm}[t]
    \scriptsize
    \caption{Find the fixed point structure $\mathrm{Fix}(\mathcal{S})$}
    \label{alg: Find the structure of fixed points}
    \begin{algorithmic}[1]
        \State \textbf{Input:} $\hat{S}$, the super-operator form of $\mathcal{S}$
        \State \textbf{Output:} The structure of $\mathrm{Fix}(\mathcal{S})$
        \Procedure{Fixed-Subspace-Structure}{$\mathcal{S}$}
            \State $\{A_1, \ldots, A_N\} \gets \textsc{Per-algebra-as-linear-span}(S)$ \Comment{\cite{blume2010information} repr. of $\cA$ as a linear span}
            \State $\{C_1, \ldots, C_{M}\} \gets \textsc{Center-of-algebra}(A_1, \ldots, A_N)$ \Comment{Algorithm~\ref{alg: central algebra}}
            \State Ensure $\{A_j\}_j$ and $\{C_j\}_{j}$ are Hermitian via $A \to A + A^{\dagger}$ and $A \to i(A - A^\dagger)$
            \State $P \gets \text{projector on support of the center of $\cA$}$ \Comment{$P$ projector onto $\oplus_{k=1}^K H_{k,1} \otimes H_{k,2}$}
            \State minimalCentralProj \(\gets\) \textsc{Minimal-projections}$(P, \{C_i\})$ \Comment{Algorithm~\ref{alg: decomposing to minimal}}
            \For{$k \gets 1$ \textbf{to} $K$} 
                \State $P_k \gets$ minimalCentralProj[$k$]
                \State minimalProj[$k$] $\gets$ \textsc{Find-minimal-projections}$(P_k, \{A_i\})$ \Comment{Algorithm~\ref{alg: decomposing to minimal}} 
            \EndFor
            \State{ basisSet$\gets$ \textsc{Construct-Basis}($ \{A_i\}$, minimalCentralProj, minimalProj )}\Comment{Algorithm~\ref{alg: Construct the Basis}}
            \State Compute $\{\omega_k\}$ as in~\cite[Lemma 5.4]{blume2010information}
            \State \textbf{return} basisSet, $\{ \mc P_k\}$
        \EndProcedure
    \end{algorithmic}
\end{algorithm}

    \begin{algorithm}[t]
        \scriptsize
        \caption{Compute center of algebra $\mathcal{A} = \text{span} \{A_1, \ldots, A_N\}$}
        \label{alg: central algebra}

        \begin{algorithmic}[1]
            \State \textbf{Input:} $A_1, A_2, \ldots, A_N \in \cM_d$ \Comment{$\cA = \mathrm{span}(A_1, \dots, A_N)$}
            \State \textbf{Output:} $\{C_1, \ldots, C_M\}$ that span the center of $\mathcal{A}$
            \Procedure{Center-of-algebra}{$A_1, A_2, \ldots, A_N$}
                \State $\Gamma \gets  \sum_{i} (A_i^\dagger A_i) \otimes I 
                    - A_i^\dagger \otimes A_i^T
                    - A_i \otimes \overline{A}_i
                    + I \otimes (\overline{A}_i A_i^T)$ \Comment{$\bar{A}$ is the complex conjugate of $A$}  
                \State Compute the kernel of $\Gamma$ as $\text{span}\{ B_1, \ldots, B_{N'}\}$ \Comment{See Lemma~\ref{lem: generators of commutant}, $\text{span}\{ B_1, \ldots, B_{N'}\} = \cA'$}
                \State $\Gamma' 
                    \gets \sum_{i} (B_i^\dagger B_i) \otimes I 
                    - B_i^\dagger \otimes B_i^T
                    - B_i \otimes \overline{B}_i  
                    + I \otimes (\overline{B}_i B_i^T)$   \Comment{$\ker \Gamma'$ corresponds to $\cA'' = \cA$}
                \State Compute the kernel of $\Gamma + \Gamma'$ as $\text{span}\{ C_1, \ldots, C_M\}$ \Comment{$\Gamma, \Gamma' \geq 0$, so $\ker \Gamma + \Gamma' = \ker \Gamma \cap \ker \Gamma'$}
                \State \textbf{return} $\{C_1, \ldots, C_M\}$ 
            \EndProcedure \Comment{The runtime of this procedure is $O(N d^4 + d^6)$}
        \end{algorithmic}
    \end{algorithm}

    \begin{algorithm}[t] 
            \scriptsize

        \caption{Find  projection smaller than $P$}
        \label{alg: reduce P}
        \begin{algorithmic}[1]
            \State \textbf{Input:} Orthogonal projection $P$ and $\{A_i\}_{i=1}^N$ spanning $\cA$
            \State \textbf{Output:} Projection $Q \in \cA$ such that $QP = Q$, and $\tr(Q) \leq \tr(P)/2$ if $P$ is not minimal
            \Procedure{Reduce-Projection}{$P, \{A_i\}$ }
                \If{$\exists i$ is such that $P A_i P \not\in \mb{C} P$} \Comment{$P$ is not minimal}
                    \State Write spectral decomposition $P A_i P = \sum_{j} \lambda_j P_j$ \Comment{At least $2$ nonzero eigenvalues}
                    \State \Comment{\cite{holbrook2003noiseless} shows that $P_j \in \cA$ for all $j$}
                    \State \textbf{return} $Q = \argmin\{\tr(P_j)\}$ \Comment{We have $\tr(Q) \leq \tr(P)/2$ as $\sum_{j} P_j \leq P$}
                \Else \Comment{$P$ is minimal}
                    \State \textbf{Return } $P$
                    \EndIf
            \EndProcedure \Comment{The runtime of this procedure is $O(N d^3)$}
        \end{algorithmic}
    \end{algorithm}

    \begin{algorithm}[t]
            \scriptsize

            \caption{Find one minimal projection in the range of projection \(P\)}
            \label{alg: find one minimal in P}
            \begin{algorithmic}[1] \label{alg:find one minimal in P}
                \State \textbf{Input:} Orthogonal projection $P$ and $\{A_i\}$ spanning $\cA$
                \State \textbf{Output:} A minimal projection $Q \in \cA$ such that $Q \leq P$
                \Procedure{Find-one-minimal-projection}{$P, \{A_i\}$}
                    \State $Q\gets$ \textsc{Reduce-Projection}($P, \{A_i\}$) \Comment{Algorithm~\ref{alg: reduce P}}
                    
                    \While{$P\neq Q$} \Comment{As the trace is divided by $2$ at each step, at most $\log d$ steps} 
                        \State $P\gets Q$
                        \State $Q\gets$ \textsc{Reduce-Projection}($P, \{A_i\}$) 
                    \EndWhile
        
                    \State \textbf{return} $P$ 
                \EndProcedure \Comment{\textsc{Reduce-Projection} is called at most $\log d$ times}
            \end{algorithmic}
        \end{algorithm}

    \begin{algorithm}[t]
            \scriptsize 

        \caption{Decomposing $P$ to minimal projection in algebra $\mc A$}
        \label{alg: decomposing to minimal}
        \begin{algorithmic}[1]
            \State \textbf{Input:} Orthogonal projection $P \in \cA$ and $\{A_i\}$ spanning $\cA$
            \State \textbf{Output:} A set of minimal projections $Q_1, \dots, Q_{s}$ in $\cA$ such that $Q_1 + \dots + Q_{s} = P$
            \Procedure{Find-minimal-projections}{$P,\{A_i\}$}
                
                \State minimalProjections \(\gets \{\}\) 
    
                \While{$P \neq 0$} \Comment{Iterate until $P$ becomes zero}
                    \State $Q \gets \textsc{Find-one-minimal-projection}(P)$ \Comment{Algorithm~\ref{alg: find one minimal in P}}
                    \State Add $Q$ to minimalProjections 
                    \State $P \gets P - Q$ \Comment{Update $P$ by removing the minimal projector that was found}
                    \State \Comment{Note that $P-Q$ is also an orthogonal projector}
                \EndWhile
    
                \State \textbf{return} minimalProjections 
            \EndProcedure
        \end{algorithmic}
    \end{algorithm}

\begin{algorithm}[t]
\scriptsize
\caption{Construct the Basis \( e_{k,i,j} \) for \( \mathcal{A} \)}
\label{alg: Construct the Basis}

\begin{algorithmic}[1]
    \State \textbf{Input}: $\{A_i\}$ spanning $\mathcal{A}$, Orthogonal Minimal Central Projections \(\{ P_k \}\), Orthogonal Minimal Projections \( \{ P_{k,i} \} \) for each \( P_k \)
    \State \textbf{Output}: The basis vectors \( e_{k,i,j} \)
    \Procedure{Construct-Basis}{$ \{A_i\}, \{ P_k \}, \{ P_{k,i} \} $}
        \State $K \gets$ number of minimal central projections $\{ P_k \}$
        \State $d[k] \gets$ number of minimal projections $\{P_{k,i}\}$
        \State $\text{d-prime}[k] \gets \operatorname{tr}(P_{k,1})$

        \Comment{Step 1: Compute Eigenvectors for Minimal Projections}
        \For{each \(P_{k,i}\) where \(k=1, \dots, K\) and \(i=1, \dots, d[k]\)}
            \State $\text{eigenVectors}[k][i] \gets$  of eigenvectors of \( P_{k,i} \) with unit eigenvalues 
        \EndFor

        \Comment{Step 2: Construct matrices \( U^{k,1,n} \)}
\For{$k = 1, \dots, K$ and $n = 1, \dots, d[k]$}
    \State Initialize $V \gets 0$
    \While{ $V=0$}\Comment{Exit loop once a valid \( V \) is found}
    \State Pick $A$ from $\{A_i\}$
        
        \For{$i ,j = 1, \dots, \text{d-prime}[k]$}
            \State $V[i,j] \gets \text{eigenVectors}[k][n][j]^\dagger \cdot A \cdot \text{eigenVectors}[k][1][i]$
        \EndFor
       
    \EndWhile
    \State $U[k][n] \gets \frac{V}{\operatorname{tr}(V^\dagger V)}$ \Comment{Normalize \( U[k][n] \)}
\EndFor

        \Comment{Step 3: Construct the basis vectors \( e_{k,i,j} \)}
\For{$k = 1, \dots, K$, $i = 1, \dots, d[k]$, and $j = 1, \dots, \text{d-prime}[k]$}
    \State Initialize $\text{basisVector} \gets 0$
    \For{$m = 1, \dots, \text{d-prime}[k]$}
        \State $\text{basisVector} \gets \text{basisVector} + U[k][i][j,m] \times \text{eigenVectors}[k][i][m]$
    \EndFor
    \State $\text{basisSet}[k][i][j] \gets \text{basisVector}$ \Comment{Store in $\text{basisSet}[k][i][j]$}
\EndFor
        \State \textbf{Return} The set \( \text{basisSet} \)
    \EndProcedure
\end{algorithmic}
\end{algorithm}

    \begin{proof}
        By the fact that $A\otimes B|C\ra\ra = |ACB^T\ra\ra$ (see~\cite[Proposition 2.20]{Watrous_2018}) applies if the dimensions of $A$, $B$, and $C$ indicate $ACB^T$ is a valid matrix. So $X\otimes I |Y\ra\ra= |XY\ra\ra$ and $I\otimes X^T|Y\ra\ra =|YX\ra\ra$. Therefore if we define 
        $$\Gamma_{A_i}:=\left( A_i \otimes I - I \otimes A_i^T \right)^\dagger \left( A_i \otimes I - I \otimes A_i^T \right),$$
         then an operator $X$ commutes with $A_i$ if and only if $\Gamma_{A_i}|X\ra\ra= 0$. As $\Gamma_{A_i}$ is positive,  the kernel of $\Gamma$ is the intersection of kernels of all $\Gamma_{A_i}$. Thus, $X \in \cA'$ if and only if $| X \ra \ra \in  \Gamma$.
         
    \end{proof}

\begin{remark}
    The structure of the peripheral (fixed) subspace of collective noise for different numbers of qubits is summarized in Table~\ref{tab:peripheral_subspace}. Additionally, the time required to compute this structure using the code provided in~\cite{mostafa_taheri_2024_12821315} on an Apple M2 Pro processor is included.

    \begin{table*}[t]
    \centering
    \begin{tabular}{|c|c|c|c|}
    \hline
    $n$ & Dimension & $\chi_\mc{T}$ & Time \\ \hline
    3 & $8 \times 8$ & $CI_4 \oplus (\mc M_2 \otimes I_2)$ & 18.576 ms  \\ \hline
    4 & $16 \times 16$ & $\mc M_2\oplus CI_5\oplus(\mc M_3 \otimes I_3)$ & 647.805 ms  \\ \hline
    5 & $32 \times 32$ & $CI_6 \oplus (\mc M_5 \otimes I_2) \oplus (\mc M_4 \otimes I_4)$ & 4.763 s \\ \hline
    6 & $64 \times 64$ & $\mc M_5 \oplus CI_7 \oplus (M\mc M_5 \otimes I_5) \oplus(\mc M_9 \otimes I_3)$ & 223 s \\ \hline
    \end{tabular}
    \caption{Structure of $\chi_\mc{T}$, matrix sizes, and computation times for various number of qubit.}
    \label{tab:peripheral_subspace}
    \end{table*}
\end{remark}
\section{Infinite dimension}
Before commencing the proof of Proposition~\ref{prop: infinte-hilbert-capacity}, we present a proposition, and then we proceed to outline the proof.
\begin{lemma}[Theorem 19 of~\cite{olkiewicz_1999}]\label{lemma:structure of isometric subspace}

    Let \(\mathcal{T}: \operatorname{Tr}(H) \to \operatorname{Tr}(H)\) be a CPTP map and contractive in the operator norm. Then the isometric subspace \(\Lambda_{\mathcal{T}}\) is a \(\mathcal{T}\)-invariant subspace and \(\Lambda_{\mathcal{T}}\) decomposes as
\[
\Lambda_{\mathcal{T}} = \bigoplus_{k} \Lambda_{\mathcal{T}}^{(k)},
\]
    where, for all \(\phi \in \Lambda_{\mathcal{T}}^{(k)}\) and \(\psi \in \Lambda_{\mathcal{T}}^{(l)}\), we have \(\tr(\psi \phi) = 0\) for \(k \neq l\).
For each \(k\), there exists a Banach space isomorphism $\alpha_k :\Lambda_\mc T^{(k)}\to \Tr(\tilde{ H}_k)$, such that the action of \(\mathcal{T}\) on \(\Lambda_{\mathcal{T}}^{(k)}\) corresponds to \(U^\dagger_k \cdot U_k\), where \(U_k\) is a unitary operator on \(\tilde{H}_{k}\), i.e. 
\[ \alpha_k \circ \mc T \circ \alpha_k^{-1} (.) = U_k \: . \: U_k^\dagger\]
\end{lemma}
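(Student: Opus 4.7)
My plan is to reduce the statement to the classical structure theory of von Neumann algebras by producing a $*$-algebra structure on $\Lambda_{\cT}$ via the Kadison–Schwarz equality for the Heisenberg-picture dual $\cT^*$. First I would verify invariance: for $x\in\Lambda_{\cT}$ and $t\in\mathbb{N}$,
\[
\|\cT^t(\cT(x))\|_2 = \|\cT^{t+1}(x)\|_2 = \|x\|_2 = \|\cT(x)\|_2,
\]
so the forward isometry condition is inherited by $\cT(x)$. The adjoint condition follows once one shows $\cT^*\cT=\cT\cT^*=\mathrm{id}$ on $\Lambda_{\cT}$: the operator-norm contractivity hypothesis together with trace-norm contractivity (which holds automatically for CPTP maps) lets one interpolate $\cT$ to a Hilbert–Schmidt contraction via Riesz–Thorin, so that $I-\cT^*\cT\geq 0$ on $\operatorname{HS}(H)$, and its kernel coincides with $\{x:\|\cT(x)\|_2=\|x\|_2\}\supseteq\Lambda_{\cT}$.

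Next I would extract the $*$-algebra structure through the multiplicative domain
\[
\cM(\cT^*)=\{a\in B(H):\cT^*(a^*a)=\cT^*(a)^*\cT^*(a),\;\cT^*(aa^*)=\cT^*(a)\cT^*(a)^*\},
\]
a weak-$*$ closed $*$-subalgebra of $B(H)$ on which $\cT^*$ restricts to a normal $*$-homomorphism. Using the pre-duality between $B(H)$ and $\Tr(H)$, the Hilbert–Schmidt isometry condition defining $\Lambda_{\cT}$ translates into a Kadison–Schwarz equality condition on the corresponding operators in $B(H)$, identifying the predual of $\cM(\cT^*)$ with $\Lambda_{\cT}$ and realizing $\cT|_{\Lambda_{\cT}}$ as the predual of a normal $*$-automorphism of $\cM(\cT^*)$. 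The central decomposition of the von Neumann algebra $\cM(\cT^*)$ into type $I$ factors $B(\tilde{H}_k)$ then produces $\Lambda_{\cT}=\bigoplus_k\Lambda_{\cT}^{(k)}$, with the Banach space isomorphisms $\alpha_k\colon\Lambda_{\cT}^{(k)}\to\Tr(\tilde{H}_k)$ being the canonical predual identifications; the orthogonality $\tr(\phi\psi)=0$ across distinct summands follows from the mutual orthogonality of the underlying minimal central projections. Finally, every normal $*$-automorphism of the type $I$ factor $B(\tilde{H}_k)$ is inner, so there exists a unitary $U_k$ on $\tilde{H}_k$ giving the claimed form $\alpha_k\circ\cT\circ\alpha_k^{-1}(\cdot)=U_k\cdot U_k^\dagger$.

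The main obstacle I expect is the pre-duality bridge in the second stage. In finite dimensions the identification of the Hilbert–Schmidt isometric subspace of $\Tr(H)$ with the predual of $\cM(\cT^*)\subset B(H)$ is essentially tautological, but in the infinite-dimensional setting one must carefully use the operator-norm contractivity of $\cT$ (equivalently, normality of $\cT^*$) to ensure $\cM(\cT^*)$ is weak-$*$ closed, verify that its predual sits inside $\Tr(H)$ exactly as claimed rather than in a larger non-atomic part of $\cM(\cT^*)_*$, and argue that the central decomposition yields only type $I$ factors rather than containing type $II$ or $III$ components. Once that bridge is in place, the remaining parts of the argument reduce to standard CP-map calculus and the classical fact that every normal $*$-automorphism of $B(\tilde{H})$ is implemented by a unitary.
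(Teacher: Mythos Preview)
The paper does not prove this lemma: it is quoted verbatim as Theorem~19 of Olkiewicz~\cite{olkiewicz_1999} and used as a black box in the proof of Proposition~\ref{prop: infinte-hilbert-capacity}. There is therefore no in-paper proof to compare against. That said, your outline follows essentially the same architecture as Olkiewicz's original argument: interpolate to a Hilbert--Schmidt contraction, pass to the Heisenberg picture, exploit Kadison--Schwarz equality to obtain a $*$-algebra (the multiplicative domain of $\cT^*$), and then invoke the structure theory of von Neumann algebras together with the fact that normal $*$-automorphisms of type~$I$ factors are inner.

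Two points you flag as obstacles are genuine and are exactly where the work lies. First, the pre-duality bridge: you need to show that $\Lambda_{\cT}$ is not merely contained in the predual of $\cM(\cT^*)$ but coincides with it, and that $\cT^*$ restricted to $\cM(\cT^*)$ is a $*$-\emph{automorphism}, not just a $*$-homomorphism. Surjectivity is where the two-sided condition $\|\cT^{*t}(x)\|_2=\|x\|_2$ in the definition of $\Lambda_{\cT}$ is actually used; without it you would only get an endomorphism and the unitary implementation would fail. Second, ruling out type~$II$ and~$III$ components: the mechanism is that $\Lambda_{\cT}\subset\Tr(H)$, and Olkiewicz shows (his Proposition~5, also cited in the paper) that $\Lambda_{\cT}$ is generated by finite-rank projections. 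Once you know the projections in the algebra are trace-class and hence finite-rank, the algebra is atomic and only type~$I_n$ summands can appear. You should make that step explicit rather than leave it as a worry, since it is the reason the $\Tr(\tilde{H}_k)$ description is available at all.
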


\begin{remark}
    If \(\Lambda_\mc{T}\) is not finite, then the infinite-time classical capacity of \(\mc{T}\) will be infinite. However, the infinite-time quantum capacity is not necessarily infinite. For example, consider the quantum channel \(\mc{T}: \Tr(H) \to \Tr(H)\) defined as:
    \[
    \mc{T}(X) = \sum_{i \in \mathbb{Z}} K_i X K_i^\dagger,
    \]
    where \(K_i = |i\rangle\langle i-1|\), and \(\{ |i\rangle \}\) is the orthonormal basis of \(H\). In this case, the infinite-time classical capacity is infinite (\(C_0^\infty(\mc{T}) = \infty\)), but the infinite-time quantum capacity is zero (\(Q_0^\infty(\mc{T}) = 0\)).

\end{remark}

\begin{proof}[Proof of Proposition~\ref{prop: infinte-hilbert-capacity}]
   
We begin by showing that the isometric and peripheral subspaces are the same if \(\Lambda_\mc{T}\) is finite-dimensional. Afterward, we define a restricted channel \(\bar{\mc{T}}\) on a finite-dimensional Hilbert space that acts similarly to \(\mc{T}\) on the isometric subspace. Finally, we use the result from Theorem~\ref{thm:classical and quantum capacity of repeated channel} to determine the achievable capacity of \(\mc{T}^t\) in the limit as \(t \to \infty\).

For any CPTP map \(\mc{T}\), the peripheral subspace is always a subset of the isometric subspace (see Proposition 9 of~\cite{olkiewicz_1999}). To prove \(\Lambda_\mc{T} = \chi_\mc{T}\), it suffices to show that \(\Lambda_\mc{T} \subseteq \chi_\mc{T}\) when \(\Lambda_\mc{T}\) is spanned by a finite number of generators.

By Lemma ~\ref{lemma:structure of isometric subspace}, the isometric subspace \(\Lambda_\mc{T}\) can be decomposed into \(K\) subspaces \(\{\Lambda_\mc{T}^{(k)}\}_{k=1}^K\), where each \(\Lambda_\mc{T}^{(k)}\) is isomorphic to \(\Tr(\tilde{H}_k)\), with \(\tilde{H}_k\) a finite-dimensional Hilbert space. Let \(\{|i_k\ra\}_{i=1}^{d_k}\) be a complete basis of \(\tilde{H}_k\), where \(d_k = \dim(\tilde{H}_k)\), such that the unitary operator \(U_k\) acts as \(U_k |i_k\ra = e^{i\theta_i} |i_k\ra\), for some phases \(\theta_i\). Considering the action of \(U_k\) on the operators \(|i_k\ra\la j_k|\), we have:
\[
U_k |i_k\ra\la j_k| U_k^\dagger = e^{i(\theta_i - \theta_j)} |i_k\ra\la j_k|.
\]

 Using the isomorphism \(\alpha_k^{-1}\) that maps \(\Tr(\tilde{H}_k)\) to \(\Lambda_\mc{T}^{(k)}\), we find:
\begin{align*}
    \mc{T} \circ \alpha_k^{-1}(|i_k\ra\la j_k|) &= \alpha_k^{-1}(U_k |i_k\ra\la j_k| U_k^\dagger) \\
    &= e^{i(\theta_i - \theta_j)} \alpha_k^{-1}(|i_k\ra\la j_k|).
\end{align*}

This shows that \(\alpha_k^{-1}(|i_k\ra\la j_k|)\) belongs to the peripheral subspace \(\chi_\mc{T}\) for all \(i, j\). Since \(\Tr(\tilde{H}_k)\) is spanned by the operators \(\{|i_k\ra\la j_k|\}\), it follows that \(\Lambda_\mc{T}^{(k)}\) is spanned by \(\{\alpha_k^{-1}(|i_k\ra\la j_k|)\}\). Therefore, \(\Lambda_\mc{T}^{(k)} \subseteq \chi_\mc{T}\). Summing over all \(k\), we conclude that \(\Lambda_\mc{T} \subseteq \chi_\mc{T}\). Combined with the fact that \(\chi_\mc{T} \subseteq \Lambda_\mc{T}\), we conclude that \(\Lambda_\mc{T} = \chi_\mc{T}\).

Since we assumed $\Lambda_\mc T$ is spanned by a finite number of generators, and  we know that $\Lambda_\mc T$ is generated by finite-dimensional orthogonal projections~\cite[Proposition 5]{olkiewicz_1999},therefore, $\Lambda_\mc T$ is generated by a finite number of finite-dimensional  projections, i.e

\begin{equation}
\label{eq: lambda-T span projections}
    \Lambda_\mc T = \operatorname{span} \{ P_i\in \Tr( H); i=1,\cdots d\}.
\end{equation} 
Let us define $S = \sum_{i=1}^d P_i$. 
We let $H'$ be the support of $S$. As $H'$ is finite-dimensional, there exists a subspace $H''$ such that $H = H' \oplus H''$.
So any $X\in \Lambda_\mc T$ can be written as $x\oplus 0$ in    the decomposition $H' \oplus H''$. 

 Now let us show that the map $\mc{T}$ keeps $\Tr( H')\oplus 0$ invariant.  Let  $x \in \Tr( H')$ be positive semi-definite operator, and $S=s\oplus 0$.  Recall that $x$ and $s$ are both  positive semi-definite operator on finite-dimensional Hilbert space and $\supp(x)\subseteq \supp(s)$, thus there exists $\epsilon>0$ such that $s- \epsilon x>0$~\cite[Lemma 1.1]{blume2010information}. From Lemma~\ref{lemma:structure of isometric subspace} $\Lambda_\mc T$ is invariant under action of $\mc T$, so $\mc T(s\oplus 0)= s' \oplus 0$. As $\mc T$ is linear we have 
\[ \mc T(s\oplus 0) - \epsilon \mc T(x \oplus 0)= s'\oplus0 -\epsilon(\mc T(x \oplus 0))  \geq 0.\]
Since $\mc T$ is complete positive, both $s'\oplus0$ and $\mc T(x \oplus 0)$ are positive semi-definite. Consequently,  the support of $\mc T(x \oplus 0)$ must be a subset of  the support of $s'\oplus0$, thus $\supp(\mc T(x\oplus 0)) \subseteq  H' $.  It can be easily generalized to any $x\in \Tr(H')$ by using the fact that any operator can be written as linear combination of four positive semi-definite operator.
Therefore $\mc T$ keeps $\Tr(H')$ invariant. 

Let us now define the map $\bar{\mc{T}} : \Tr(H') \to \Tr(H')$ as follows: For any $x \in \Tr(H')$, let $\bar{\mc{T}}(x) = \mc{T}(x \oplus 0)$, where $x \oplus 0$ is written in the decomposition $H' \oplus H''$. Since $\Tr(H')$ is invariant under the action of $\mc T$, it is easy to verify that $\bar{\mc T}$ is a CPTP map on operators on the finite-dimensional Hilbert space $H'$.

Now we can apply Eq.~\eqref{eq:peripheral-decomposition-1} and Theorem~\ref{thm:classical and quantum capacity of repeated channel} to $\bar{\mc{T}}$.  By definition, the peripheral subspace of $\bar{\mc T}(X)$ is also $\Lambda_{\mc{T}}$ (up to $0$ acting on $H''$).

This gives the achievable infinite-time capacity of \(\mc{T}\). The only remaining task is to clarify how the encoder and decoder on \(\Tr(H')\) can be extended to \(\Tr(H)\). For an encoder $\mc E: \Tr(\mathbb{C}^d)\to \Tr(H')$, we define the extension to $\Tr(H)$ as $\tilde{\mc E}(X) =\mc E(X) \oplus 0$. For a decoder $\mc{D} : \Tr(H') \to \Tr(\mathbb{C}^d)$, we define the extension to $\Tr(H)$ as $\tilde{\mc{D}}(X) = \mc{D}(P_{H'} X P_{H'}) + (1-\tr(P_{H'}X)) \proj{\psi}$ where $\ket{\psi}$ is an arbitrary state in $H''$ and $P_{H'}$ is the orthogonal projector on $H'$.
\end{proof}

\end{document}